\documentclass[11pt,a4paper]{article}
\usepackage{lineno}
\usepackage{longtable}
\usepackage[utf8]{inputenc}
\usepackage{amsmath}
\usepackage{amsthm}
\usepackage[bitstream-charter]{mathdesign}
\usepackage{siunitx}
\usepackage{adjustbox}
\newtheorem{theorem}{Theorem}
\usepackage{xcolor}
\usepackage{graphicx}
\usepackage[margin=1in]{geometry}
\usepackage{float}
\usepackage{verbatimbox}
\usepackage{booktabs}
\usepackage{algorithm}
\usepackage{algorithmic}

\usepackage[normalem]{ulem}

\usepackage{authblk}
\usepackage{lipsum}
\newcommand{\dis }{\displaystyle}
\usepackage{caption} 
\usepackage{subcaption} 
\usepackage[T1]{fontenc}
\theoremstyle{definition}

\setlength{\tabcolsep}{20pt}
\renewcommand{\arraystretch}{1.5}

\makeatletter
\def\thm@space@setup{\thm@preskip=1.2\parskip \thm@postskip=0pt}
\makeatother
\usepackage{tikz}
\usetikzlibrary{arrows}
\usepackage[toc,page]{appendix}
\setlength{\parindent}{0pt}

\usepackage{xurl} 
\usepackage[authoryear, round]{natbib}

\usepackage{hyperref}\hypersetup{
	colorlinks=true,
	citecolor=blue,
	linkcolor=red,
	filecolor=blue,
	urlcolor=blue,
}

\setlength{\parindent}{0pt}
\setlength{\parskip}{5pt plus 2pt minus 1 pt}

\begin{document}
	\pagenumbering{gobble} 
	
	\section*{Graphical Abstract}
	
	\begin{figure}[H]
		\centering
		\includegraphics[width=\textwidth]{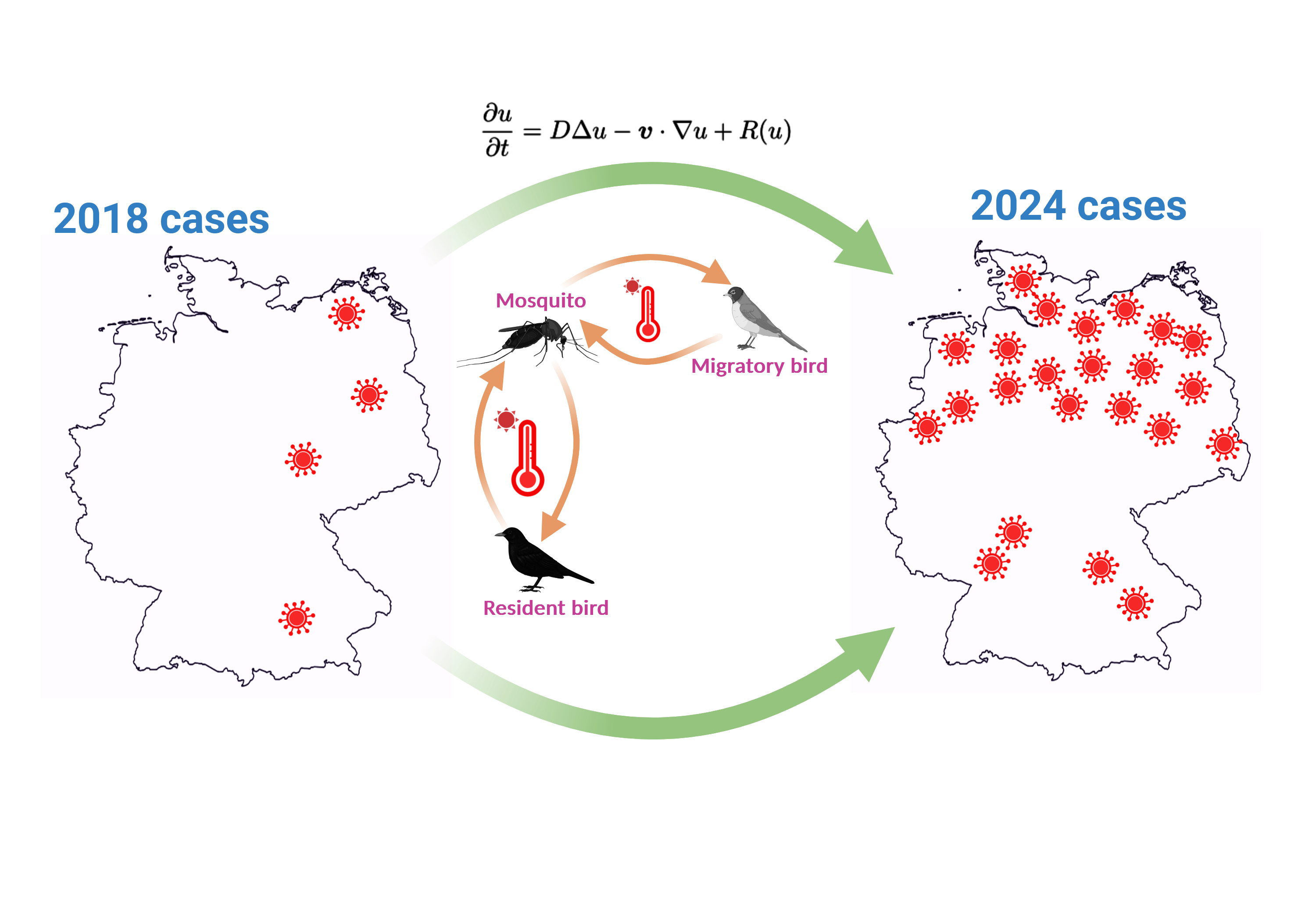}
		\label{f0}
	\end{figure}
	\vspace{0.2cm}
	\begin{center}
	\end{center}

	\newpage
	\section*{Highlights}
	
	\vspace{1em}
	\begin{itemize}
		\item A partial differential equation model for the spatial spread of West Nile virus (WNV) in Germany was developed.
		\item The random movement of mosquitoes and birds is defined by the diffusion process.
		\item The directed movement of migratory birds is defined by the advection process.
		\item Temperature-dependent mosquito biting, latency, and mortality rates are considered.
		\item Spreading speeds are estimated from the observed data.
		\item The model is compared to the real-world data of WNV cases in birds, horses and humans.
		\item The model proves to be a promising early warning tool for WNV spreading patterns in Germany and can be adapted for any geographic region.
	\end{itemize}

	\newpage
	
	\clearpage 
	\pagenumbering{arabic} 
	\setcounter{page}{1} 

	\title{Modeling the impact of temperature and bird migration on the spread of West Nile virus}

	\author{
		Pride Duve\thanks{Corresponding author: \texttt{pride.duve@bnitm.de}}, \;Felix Gregor Sauer, and Renke Lühken \\
		\small	Bernhard Nocht Institute for Tropical Medicine, Hamburg, Germany
	}

	\date{}
	\maketitle

	\begin{abstract}
		
		West Nile virus (WNV) is a climate-sensitive mosquito-borne arbovirus circulating between mosquitoes of the genus \textit{Culex} and birds, with a potential spillover to humans and other mammals. Recent trends in climatic change, characterized by early and/or prolonged summer seasons, increased temperatures, and above-average rainfall, probably facilitated the spread of WNV in Europe. In this work, we formulate a spatial WNV model consisting of a system of parabolic partial differential equations (PDEs), using the concept of diffusion and advection in combination with temperature-dependent parameters, i.e. mosquito biting rate, extrinsic incubation rate, and mortality rate. Diffusion represents the random movement of both mosquitoes and hosts across space, while advection captures the directed movement of migratory birds. The model is first studied mathematically, and we show that it has non-negative, unique, and bounded solutions in time and space. Numerical simulations of the PDE model are performed using temperature data for Germany (2019 - 2024). Results obtained from the simulation showed a high agreement with the reported WNV cases among birds and equids in Germany. The observed spreading patterns from the year 2018 to 2022 and the year 2024 were mainly driven by temperature in combination with diffusion processes of hosts and vectors. Only during the year 2023, the additional inclusion of advection for migratory birds was important to correctly predict new circulation areas in Germany.
		
		\vspace{2em}
		
		\noindent \textbf{Keywords:} West Nile virus, diffusion, advection, spread, mosquitoes, migratory birds, resident birds, hotspots
	\end{abstract}
	
	
	\section{Introduction}
	
	West Nile virus (WNV) is an arbovirus of the Japanese encephalitis antigenic complex of the \textit{Flaviviridae} family. It was first detected in a febrile patient from the West Nile district of Northern Uganda in 1937 \citep{Smithburn1940}, but is now considered the most geographically widespread arbovirus in the world \citep{Ciota2017}. The virus circulates in an enzootic cycle between \textit{Culex} mosquitoes and birds, with regular spillovers to humans, equids, and other mammals \citep{Calisher1989, Hublek1999, Kramer2008}. Birds are considered amplifying hosts as they develop sufficient levels of viremia, allowing for the infection of mosquitoes, while humans, equids, and other mammals are considered dead-end hosts \citep{Rappole2003}. Most humans infected with WNV are asymptomatic, but few may develop fever, headache, rash, among others \citep{Petersen2013}, including the risk of a fatal outcome \citep{Campbell2002}. Infected equids often exhibit ataxia, lethargy, and weakness as signs of the infection \citep{Salazar2004}. Multiple theories have been discussed regarding the introduction of new WNV cases in a previously unaffected region. There is evidence that migratory birds play an important role in the spread of the pathogen \citep{Rappole2003}. For example, birds captured during their seasonal migration have tested positive for WNV in Israel \citep{ernek1977arboviruses, Malkinson2002, NIR1967, watson1972ectoparasite}. Migratory birds likely spread the virus to stopover sites in Europe as they travel from the south to the north \citep{hannoun1972role, Hublek1999}. In contrast, considering the speed and duration of flights of migratory birds in combination with the duration of viremia, other researchers argued that the short-distance movement of non-migratory birds better explains the spread pattern of WNV \citep{Rappole2003}.
	
	Several mathematical models for WNV have been formulated. Many models emphasize the role of climatic conditions on the dynamics of WNV, as temperature plays a critical role in the life-history traits of mosquitoes and virus replication. Thereby, temperature-dependent functions are used to model mosquito and virus development rates \citep{Heidecke2024, Laperriere2011, Mbaoma2024}. These are included in ordinary differential equations (ODEs) to analyze the dynamics of WNV. However, local transmission dynamics are generally modeled within the vector and host populations without considering the geographic spread. \cite{Fesce2022} estimated the basic reproductive number of the WNV infection and used it to quantify the infection spread in mosquitoes and birds, to classify areas with temperatures suitable for the outbreak of WNV. Other studies focus on the control of WNV, e.g., \cite{BOWMAN2005} simulated mosquito reduction methods and personal protective measures. A potential spill-over from birds to equids and humans was modeled by \cite{Laperriere2011} using temperature-driven parameters and providing values for parameters that govern the developments and transitions from one compartment to another. Other models analyzed differences in seasonal trends using sinusoidal (trigonometric) functions to represent seasonal variations of the WNV disease dynamics \citep{abdelrazec2015dynamics, CruzPacheco2009, Moschini2017}. Such models usually only show seasons when mosquitoes are most likely to be active, without factoring in climatic effects.
	
	Different modeling frameworks also include the role of migratory birds in the transmission of WNV. \cite{DBergsman2016} formulated an ODE model that investigates the interplay of WNV in both resident and migratory hosts. The model considers the yearly seasonal outbreaks that depend primarily on the number of susceptible migrant birds entering the local population each season, demonstrating that the early growth rates of seasonal outbreaks are more influenced by the migratory than the resident bird population. For Germany, just recently, \cite{Mbaoma2024} developed an ODE model for WNV by incorporating the complete life cycle of \textit{Cx. pipiens} as a vector; resident birds, migratory birds, and humans as hosts. The study concluded that short-distance migratory birds such as the hooded crow were identified as important hosts in maintaining the enzootic transmission cycle of WNV in Germany. These ecological patterns for bird migration provide the fundamental biological basis for incorporating migratory birds into our model. This is mainly because the bird movement patterns may determine where the virus can be introduced and how fast it can spread during the transmission seasons.
	
	Although ODE-based models allow us to understand the transmission dynamics of WNV, they do not generally consider the spatial movement of hosts and vectors across different regions without incorporating additional types of models, like multi-patch models or distance dispersal kernels. For example, \cite{Bhowmick2023} used an ODE-based modeling approach in combination with different distance dispersal kernels to describe the spreading process of WNV through migratory birds in Germany. As a promising alternative, researchers have incorporated the effects of spatial variations and disease spread across different geographical regions in ODE-based models, using partial differential equation (PDE) models. \cite{Lin2017} developed a reaction-diffusion PDE model for the spatial spread of WNV in mosquitoes and birds in North America using free-boundary conditions. Another study of the WNV propagation was performed by \cite{Maidana2009}, where they proposed a spatial model for analyzing the WNV spread across the USA, calculating the traveling wave speed. The model considers a system of PDEs that describe the movement of mosquito and avian populations using the diffusion and advection processes. Results from this study show that mosquito movements do not play a major role in the spread, while bird advection is an important factor for lower mosquito biting rates. 
	
	\cite{MAIDANA2008, MAIDANA2013} also contributed a lot to the spatial modeling of WNV using PDE models, and their framework is the basis of this study. Our model differs from previous WNV PDE models available in the literature, as we consider two different hosts, i.e., resident and migratory birds, allowing us to capture the changes in the transmission patterns of WNV. Furthermore, we simulate the model in Germany, accounting for spatial heterogeneity, identify hotspots, and give theoretical insights into the contribution of migratory birds to WNV spread. In Germany, WNV was first detected in August 2018 in birds from the Southeastern region of Berlin \citep{Ziegler2019}. Since then, WNV has been established in Germany and has spread. Germany lies along migration routes of birds between Europe, the Mediterranean, and Africa \citep{Musitelli2018}. A study conducted by \cite{Nussbaumer2021} indicates that most migratory birds arrive in the northeastern region of Germany during spring and leave the country in early autumn via the southwestern direction. First WNV strains confirmed in Germany were closely related to strains previously detected in Austria and the Czech Republic \citep{Ziegler2019}, indicating a potential route of introduction, and their ability to overwinter in \textit{Cx. pipiens} \citep{Kampen2021} allows the long-term persistence of the virus.

	\section{Model formulation}
	
	Building on ODE-based models developed by \cite{Laperriere2011} and \cite{Mbaoma2024}, we formulate a model for the spatial spread of WNV in Germany between 2019 and 2024. Our spatial model consists of a system of semi-linear parabolic PDEs governed by diffusion, advection, and reaction processes, together with temperature-dependent mosquito biting, latency, and mortality rates. In addition, our model includes the population of mosquitoes, resident and migratory birds. The population of mosquitoes at time $t$ and geographic location $\displaystyle \boldsymbol{x}= (x,y),$ is divided into three compartments: susceptible $\displaystyle S_V(t,\boldsymbol{x}),$ exposed $\displaystyle E_V(t,\boldsymbol{x})$ and infectious $\displaystyle I_V(t,\boldsymbol{x}),$ with the total population given by $\dis N_V(t,\boldsymbol{x})=S_V(t,\boldsymbol{x})+E_V(t,\boldsymbol{x})+I_V(t,\boldsymbol{x}).$ Susceptible mosquitoes get infected by biting an infectious resident bird $I_{rB}$ and/or migratory bird $I_{mB},$ at a temperature-dependent mosquito biting rate
	\begin{equation}\label{hz}
		\beta(T,\boldsymbol{x})=\dis \frac{0.344}{\dis 1+1.231 e^{\dis -0.184(T-20)}},
	\end{equation}
	and a probability of acquiring the infection of $\dis p_{rV}$ and $\dis p_{mV},$ from resident and migratory birds respectively. The force of infection on mosquitoes is therefore given by 
	\begin{equation*}
		\lambda_{BV}(T,\boldsymbol{x})= \lambda_{r_V}+\lambda_{m_V} = \frac{p_{rV}\beta(T,\boldsymbol{x})I_{rB}}{N_{rB}}+\frac{p_{mM}\beta(T,\boldsymbol{x})I_{mB}}{N_{mB}},
	\end{equation*}
	where $N_{rB}$ and $N_{mB}$ are the total resident and migratory bird populations. A successful contact between infectious birds and susceptible mosquitoes at location $\boldsymbol{x}$ and time $t$ sends susceptible mosquitoes to the exposed class. The latency rate on mosquitoes was fitted by \cite{Heidecke2024}, and is given by 
	\begin{equation}\label{hz1}
		\dis	\gamma_V(T,\boldsymbol{x})= \dis \frac{1}{e^{\dis - 0.09T+5.36}}.
	\end{equation}
	
	and from the exposed stage, mosquitoes progress to the infectious class $(I_V).$ Mosquitoes die at temperature-dependent natural mortality rate of $\mu_V(T,\boldsymbol{x}),$ where 
	\begin{equation}\label{hz2}
		\mu_V(T,\boldsymbol{x}) =\frac{0.0025T^2-0.094T+1.0257}{10}
	\end{equation}
	
	defined by \cite{Laperriere2011}. Spatial and temporal plots of the thermal functions (equations \ref{hz}, \ref{hz1} and, \ref{hz2}) are shown in Appendix S1. The host population consists of amplification hosts: resident birds and migratory birds. The population of hosts is divided into susceptible $S_{j},$ exposed $E_j$, infected $I_{j},$ recovered $R_{j},$ and dead $D_{j}$ hosts, with $\displaystyle j=rB, mB,$ and for easier readability, we denote resident birds by subscript $(rB),$ and migratory birds by $(mB).$ \cite{Laperriere2011} modeled the recruitment of susceptible birds using a gamma distribution, accounting for the observed seasonal birth rate of the American crow, which was the bird species observed to be mainly affected by the WNV in the Minneapolis metropolitan area. To reduce the non-linearity of our model, which has an impact on the numerical solution of the PDE model, we compared our ODE simulation using a recruitment rate of $\dis \Lambda =\mu_{rB} N_{rB},$ but found no significant differences in the numerical results of our ODE version and theirs. Thus, our resident and migratory birds are recruited into the susceptible state at a rate of $\dis \mu_{rB}N_{rB}$ and $\dis \mu_{mB}N_{mB}$ respectively. Susceptible resident and migratory birds get infected by interacting with infectious mosquitoes, with a biting rate $\dis \beta(T,\boldsymbol{x})$ defined in Equation (\ref{hz}), and a probability of acquiring the infection of $p_{Vr}$ and $p_{Vm}$ respectively. In mosquito-borne disease models, the mosquito to bird ratio is very important as it determines the per bird biting pressure \citep{Bhowmick2020, Laperriere2011, Mbaoma2024}. Moreover, \citet{Vogels2017} argue that mosquito abundance is constrained by the availability of host blood meals, and bird density directly influences mosquito feeding success. As a result, the number of bites received by an individual bird depends primarily on the ratio of mosquitoes to birds, rather than on mosquito or bird density alone. Because of this, we introduce the mosquito to bird ratio parameters $\phi_B=(\phi_{rB},\phi_{mB})$ for residence and migratory birds, respectively. In WNV models, \citet{Bhowmick2020} simulated their ODE model with a mosquito to bird ratio ranging from 10-30, while \citet{Laperriere2011} used a ratio of 30. Another study by \cite{deFreitasCosta2024}, which aimed at investigating the conditions necessary for the circulation of WNV in Germany and the Netherlands, found that larger proportions of susceptible birds and higher vector-host ratios were required. The study showed that WNV circulation in the Netherlands was rarely sustained when vector-host ratios fell below approximately 1:100.
	
	The force of infection on resident birds is given by
	
	\begin{equation}
		\lambda_{Vr}(T,\boldsymbol{x}) = \frac{\phi_{rB}p_{Vr}\beta(T,\boldsymbol{x})I_V(t,\boldsymbol{x})}{N_V},
	\end{equation}
	while that of migratory birds is 
	\begin{equation}
		\lambda_{Vm}(T,\boldsymbol{x}) = \frac{\phi_{mB}p_{Vm}\beta(T,\boldsymbol{x})I_V(t,\boldsymbol{x})}{N_V}.
	\end{equation}
	
	The latency rates on resident and migratory birds are given by $\gamma_{rB}$ and $\gamma_{mB}$ respectively, while their recovery rates are given by $\alpha_{rB},$ and $\alpha_{mB}.$ Proportions $1-\nu_{rB}$ of resident birds and $1-\nu_{mB}$ of migratory birds recover from the infection, while the proportions $\nu_{rB}$ and $\nu_{mB}$ die at the rates $\alpha_{rB},$ and $\alpha_{mB}$ respectively. Resident and migratory bird species can also die naturally at rates $\mu_{rB}$ and $\mu_{mB}.$ 
	
	\subsection{Diffusion and advection processes}
	
	Diffusion refers to the process by which matter is transported from one part of a system to another as a result of random molecular motions \citep{crank1979mathematics}. The concept of diffusion dates back to the 19th century, when \cite{Fick1855} recognized the transfer of heat by conduction due to random molecular motions and quantified this process by adopting the mathematical equation of heat conduction previously derived by \cite{baron1822theorie}. To date, the diffusion process has gained significant attention, with mathematical biology being one of the application areas. In ecological modeling, the diffusion process is used to capture the random movement of species from regions of higher concentration to regions of lower concentration \citep{murray2007mathematical}, which has an impact on their spatial distribution patterns. On the other hand, the advection process refers to the directional flow of a substance by a known velocity vector field \citep{boyd2001chebyshev}. Diffusion and advection processes combined with the ODE terms form an advection-diffusion-reaction PDE system that models how a phenomenon spreads in both time and space. Reaction terms are responsible for the local transmission at each location, while diffusion and advection are responsible for the spatial movement of species, including infectious specimens. 
	
In this study, we define diffusion for mosquitoes and both resident and migratory birds, as they locally move in search of resting/breeding sites, food, and other needs. The advection process is defined for migratory birds only, as we can account for their seasonal directional migration patterns. In contrast, diffusion is denoted by the Laplacian operator $\dis (\Delta),$ with diffusion coefficients for mosquitoes, resident, and migratory birds denoted by $\dis D_1, D_2, D_3$ in $\dis \text{km}^2/\text{day}$ respectively. Moreover, we choose the diffusion coefficients in a way that the diffusion of mosquitoes is less than that of resident and migratory birds, as mosquitoes are considered insects with a small radius of action \citep{HARRINGTON2005, Maidana2009, Verdonschot2014}. A lower diffusion coefficient is assigned to resident birds compared to migratory birds to highlight the greater mobility of migratory birds. To improve interpretability of diffusion coefficients, we define the root mean squared displacement (RMSD). The RMSD describes the dispersal distance species move from the initial source under diffusion alone after a time $\tau$ \citep{Einstein1905, metelmann2019development}, and it is calculated using the formula:
	$$\text{RMSD}=\sqrt{\dis 2 p D\tau},$$ where $p=2$ is the spatial dimension, $D$ is the diffusion coefficient and $\tau$ is the time step. We denote the advection term by the gradient operator $\dis (\nabla),$ with a velocity $\dis \boldsymbol{A}=[v_{x},v_{y}]$ \text{km}/day, in a two-dimensional plane. The schematic diagram of our PDE model and the governing system of equations are shown in Figure (\ref{f1}) and system (\ref{sys2a}).
	\begin{figure}[H]
		\centering
		\includegraphics[width=1\textwidth]{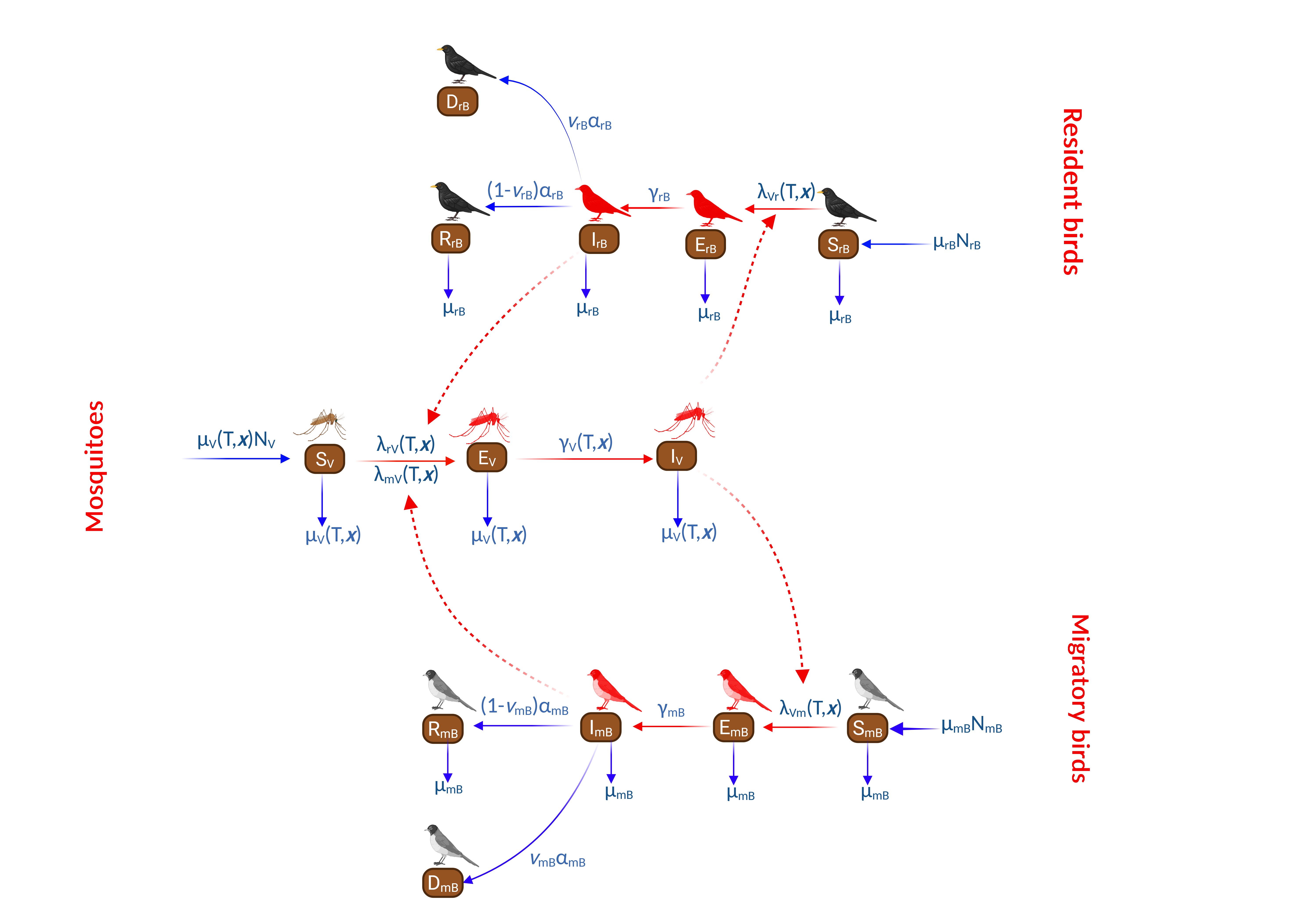}
		\caption{\small  Flow chart diagram describing the transmission dynamics of WNV in Germany. Solid arrows represent transitions between compartments, while dotted lines represent interaction between different states. Image created in BioRender. Arbovirologie, A. (2025) https://BioRender.com/8a0gfj1. Content not licensed under the Creative Commons Attribution (CC BY) license.}
		\label{f1}
	\end{figure}

	\begin{equation}\label{sys2a}   
		\begin{split}\shortstack[l]{Mosquito\\population:}\;& \begin{cases}
				\displaystyle	\frac{\partial S_V}{\partial t} & =\quad\displaystyle D_{1}\Delta S_V\quad+\quad \mu_V(T,\boldsymbol{x})N_V \quad-\quad [\lambda_{rV}(T,\boldsymbol{x})+\lambda{mV}(T,\boldsymbol{x})+\mu_V(T,\boldsymbol{x})]S_V,\\	
				\\\displaystyle	\frac{\partial E_V}{\partial t} & =\quad\displaystyle D_{1}\Delta E_V\quad+\quad [\lambda_{rV}(T,\boldsymbol{x})+\lambda{mV}(T,\boldsymbol{x})]S_V\quad-\quad\left[\gamma_V(T,\boldsymbol{x})+\mu_V(T,\boldsymbol{x})\right]E_V,\\	
				\\\displaystyle	\frac{\partial I_V}{\partial t} & =\quad\displaystyle D_{1}\Delta I_V\quad+\quad \gamma_V(T,\boldsymbol{x})E_V\quad-\quad\mu_V(T,\boldsymbol{x})I_V,\end{cases}\\
			\\\text{Resident birds:} & \begin{cases}
				\displaystyle	\frac{\partial S_{rB}}{\partial t} & =\quad\displaystyle D_{2}\Delta S_{rB} \quad+\quad \mu_{rB}N_{rB}\quad-\quad\left[\lambda_{Vr}(T,\boldsymbol{x})+\mu_{rB}\right]S_{rB},\\
				\\\displaystyle	\frac{\partial E_{rB}}{\partial t} & =\quad\displaystyle D_{2}\Delta E_{rB} \quad+\quad \lambda_{Vr}(T,\boldsymbol{x})S_{rB}\quad-\quad[\gamma_{rB}+\mu_{rB}]E_{rB},\\
				\\\displaystyle	\frac{\partial I_{rB}}{\partial t} & =\quad\displaystyle D_{2}\Delta I_{rB} \quad+\quad \gamma_{rB}E_{rB}\quad-\quad[\alpha_{rB}+\mu_{rB}]I_{rB},\\
				\\\displaystyle	\frac{\partial R_{rB}}{\partial t} & =\quad\displaystyle D_{2}\Delta R_{rB}\quad+\quad (1-\nu_{rB})\alpha_{rB} I_{rB}\quad-\quad\mu_{rB}R_{rB},\\
				\\\displaystyle	\frac{\partial D_{rB}}{\partial t} & =\displaystyle\quad \alpha_{rB}\nu_{rB} I_{rB},\\\end{cases}\\
			\\\text{Migratory birds:} & \begin{cases}
				\displaystyle	\frac{\partial S_{mB}}{\partial t} & =\quad\displaystyle D_{3}\Delta S_{mB}  -\boldsymbol{A}\cdot\nabla S_{mB}\quad+ \quad  \mu_{mB}N_{mB}\quad-\quad\left[\lambda_{Vm}(T,\boldsymbol{x})+\mu_{mB}\right]S_{mB},\\
				\\\displaystyle	\frac{\partial E_{mB}}{\partial t} & =\quad\displaystyle D_{3}\Delta E_{mB}  - \boldsymbol{A}\cdot\nabla E_{mB}\quad+ \quad \lambda_{Vm}(T,\boldsymbol{x})S_{mB}\quad-\quad[\gamma_{mB}+\mu_{mB}]E_{mB},\\
				\\\displaystyle	\frac{\partial I_{mB}}{\partial t} & =\quad\displaystyle D_{3}\Delta I_{mB}  - \boldsymbol{A}\cdot\nabla I_{mB}\quad+ \quad \gamma_{mB}E_{mB}\quad-\quad[\alpha_{mB}+\mu_{mB}]I_{mB},\\
				\\\displaystyle	\frac{\partial R_{mB}}{\partial t} & =\quad\displaystyle D_{3}\Delta R_{mB}  - \boldsymbol{A}\cdot\nabla R_{mB}\quad+ \quad(1-\nu_{mB})\alpha_{mB}  I_{mB}\quad-\quad\mu_{mB}R_{mB},\\
				\\\displaystyle	\frac{\partial D_{mB}}{\partial t} & =\quad\displaystyle  \nu_{mB}\alpha_{mB} I_{mB}.\\\end{cases}
		\end{split}
	\end{equation}

	The total populations of mosquitoes, resident birds and migratory birds are given by $\displaystyle 
	N_{V}=S_V+E_V+I_V,\;  N_{rB}=S_{rB}+E_{rB}+I_{rB}+R_{rB},\; N_{mB}=S_{mB}+E_{mB}+I_{mB}+R_{mB},$ while the initial conditions are given by:
	
	$S_V(0,\boldsymbol{x})=\psi_1(\boldsymbol{x}),$ $E_V(0)=\psi_2(\boldsymbol{x}),$ $I_V(0)=\psi_3(\boldsymbol{x}),$ $S_{rB}(0,\boldsymbol{x})=\psi_4(\boldsymbol{x}),$  $E_{rB}(0,\boldsymbol{x})=\psi_5(\boldsymbol{x}),$ $I_{rB}(0,\boldsymbol{x})=\psi_6(\boldsymbol{x}),$ $R_{rB}(0,\boldsymbol{x})=\psi_7(\boldsymbol{x}),$ $D_{rB}(0)\geq 0,$
	$S_{mB}(0,\boldsymbol{x})=\psi_8(\boldsymbol{x}),$ $E_{mB}(0,\boldsymbol{x})=\psi_{9}(\boldsymbol{x}),$ $I_{mB}(0,\boldsymbol{x})=\psi_{10}(\boldsymbol{x}),$ $R_{mB}(0,\boldsymbol{x})=\psi_{11}(\boldsymbol{x}),$ $D_{mB}(0)\geq 0,$ where, 
	
	$$\displaystyle \boldsymbol{x}=(x,y)\in\Omega, \quad \Delta =\frac{\partial^2(\cdot)}{\partial x^2}+\frac{\partial^2(\cdot)}{\partial y^2},\quad \nabla =\left[ \frac{\partial (\cdot)}{\partial x},\frac{\partial (\cdot)}{\partial y}\right].$$
	
	We assume the system moves in a region $\displaystyle \Omega\subset \mathbb{R}^2,$ with a smooth boundary $\partial\Omega$ according to Fick's law \citep{Fick1855}, so that in the initial conditions, $\boldsymbol{x}\in\Omega,$ where $\psi_i\in C^2(\Omega)\cap C(\bar{\Omega})$ and subject to the homogeneous Neumann boundary conditions:
	
	$$\frac{\partial S_V}{\partial n}=\frac{\partial E_V}{\partial n}=\frac{\partial I_V}{\partial n}=\frac{\partial S_{rB}}{\partial n}=\frac{\partial E_{rB}}{\partial n}=\frac{\partial I_{rB}}{\partial n}=\frac{\partial R_{rB}}{\partial n}=\frac{\partial S_{mB}}{\partial n}=\frac{\partial E_{mB}}{\partial n}=\frac{\partial I_{mB}}{\partial n}=\frac{\partial R_{mB}}{\partial n}=0,$$
	
	$\displaystyle \text{for}\; \boldsymbol{x}\in\partial\Omega,\;t>0,$ and $n$ is the unit outer normal to $\Omega.$ The state variables and initial conditions are summarized in Table (S1) of the supplementary file, while model parameters are shown in Table (\ref{t02}).

	\begin{table}[H]
		\caption{\small Definition of parameters used in system (\ref{sys2a}).}
		\renewcommand{\arraystretch}{1.4}
		\small
		\centering
		\begin{adjustbox}{width=1.1\textwidth}
			\begin{tabular}{llll}
				\toprule 
				\textbf{Parameter} & \textbf{Definition}&\textbf{Value}& \textbf{Source} \\
				\midrule 		
				$\mu_{rB}(\boldsymbol{x})$ & natural mortality rate of resident birds&0.0005&\citep{Mbaoma2024}\\
				$\mu_{mB}(\boldsymbol{x})$ & natural mortality rate of migratory birds&0.00023&\citep{Mbaoma2024}\\
				
				$\gamma_{V}(T,\boldsymbol{x})$&latency rate on mosquitoes&Eqn (\ref{hz1})&\citep{Heidecke2024}\\
				$\gamma_{rB}(\boldsymbol{x})$&latency rate on resident birds&0.196&\citep{Mbaoma2024}\\
				$\gamma_{mB}(\boldsymbol{x})$&latency rate on migratory birds&0.285&\citep{Mbaoma2024}\\
				
				$\alpha_{rB}(\boldsymbol{x})$&removal rate of infectious resident birds&0.867&\citep{Mbaoma2024}\\
				$\alpha_{mB}(\boldsymbol{x})$&removal rate of infectious migratory birds&0.4&\citep{Mbaoma2024}\\
				
				$\nu_{rB}(\boldsymbol{x})$&proportion of dead resident birds&0.655&\citep{Mbaoma2024}\\
				$\nu_{mB}(\boldsymbol{x})$&proportion of dead migratory birds&0.103&\citep{Mbaoma2024}\\
				
				$p_{Vr}(\boldsymbol{x})$ &probability of a new resident bird infection&0.97&\citep{Mbaoma2024}\\ 
				$p_{Vm}(\boldsymbol{x})$ &probability of a new migratory bird infection&0.9&\citep{Mbaoma2024}\\ 
				
				$p_{rV}(\boldsymbol{x})$ &probability of a new mosquito infection from a resident bird&0.4&\citep{Mbaoma2024}\\ 
				$p_{mV}(\boldsymbol{x})$ &probability of a new mosquito infection from a migratory bird&0.7&\citep{Mbaoma2024}\\ 
				$\beta(T,\boldsymbol{x})$ & mosquito biting rate&Eqn (\ref{hz})&\citep{rubel2008}\\ 
				$\mu_V(T,\boldsymbol{x})$ & natural mortality rate of mosquitoes&Eqn (\ref{hz2})&\citep{Laperriere2011}\\ 
				$\phi_{rB}(\boldsymbol{x})$ & mosquito to resident bird ratio&15&inferred from the data\\
				$\phi_{mB}(\boldsymbol{x})$ & mosquito to migratory bird ratio&15&inferred from the data\\
				$D_1(\boldsymbol{x})$ & diffusion coefficient of mosquitoes&Table (\ref{t03})&inferred from the data\\
				$D_2(\boldsymbol{x})$ & diffusion coefficient of resident birds&Table (\ref{t03})&inferred from the data\\
				$D_3(\boldsymbol{x})$ & diffusion coefficient of migratory birds&Table (\ref{t03})&inferred from the data\\
				$\boldsymbol{A}(t,\boldsymbol{x})=[v_x;v_y]$ & advection vector for migratory birds&[1;1] \& [-1;-1]&Estimated section (\ref{rs})\\
				\bottomrule
			\end{tabular}
		\end{adjustbox}
		\label{t02}
	\end{table}

	\section{Mathematical analysis of the model}
	
	\begin{theorem}
		System (\ref{sys2a}) has non-negative and unique solutions that are bounded in $\displaystyle [0,\infty).$
	\end{theorem}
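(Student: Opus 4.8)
The plan is to establish the three properties --- non-negativity, existence/uniqueness, and boundedness --- in the order that each is needed for the next. First I would set up the standard abstract framework: collect the thirteen unknowns into a vector $\boldsymbol{u}=(S_V,\dots,D_{mB})^{\mathsf T}$ and write system (\ref{sys2a}) as a semilinear parabolic system $\partial_t \boldsymbol{u} = \mathcal{D}\,\Delta \boldsymbol{u} - \mathcal{A}\cdot\nabla\boldsymbol{u} + \mathcal{R}(t,\boldsymbol{x},\boldsymbol{u})$ on $\Omega$ with homogeneous Neumann boundary conditions. Here $\mathcal{D}$ is the diagonal diffusion matrix (with zero entries in the $D_{rB},D_{mB}$ rows, so those equations are pure ODEs), $\mathcal{A}$ is the advection field acting only on the migratory compartments, and $\mathcal{R}$ collects all reaction terms, which are $C^1$ in $\boldsymbol{u}$ and continuous (indeed bounded) in $(t,\boldsymbol{x})$ because the temperature-dependent coefficients $\beta(T,\boldsymbol{x}),\gamma_V(T,\boldsymbol{x}),\mu_V(T,\boldsymbol{x})$ are continuous and bounded on the bounded domain. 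Local existence and uniqueness of a classical (or mild) solution on a maximal interval $[0,T_{\max})$ then follows from standard semigroup theory for analytic semigroups generated by the Neumann Laplacian (e.g. Pazy / Henry), given the $C^2(\Omega)\cap C(\bar\Omega)$ initial data.

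Next I would prove non-negativity of the solution on $[0,T_{\max})$. The key observation is that the reaction system is \emph{quasi-positive} (cooperative with respect to the positive cone): whenever any one component $u_i=0$ while all others are $\ge 0$, the $i$-th reaction term $\mathcal{R}_i$ is $\ge 0$. One checks this compartment by compartment --- e.g. if $S_V=0$ then $\partial_t S_V \supseteq \mu_V N_V \ge 0$ since $N_V = E_V+I_V\ge 0$; if $E_V=0$ then the gain term $(\lambda_{rV}+\lambda_{mV})S_V\ge 0$; and so on for every equation, the death compartments being trivial. Combined with the fact that diffusion and advection preserve the cone under Neumann conditions, the invariant-region / comparison principle for parabolic systems (Protter--Weinberger, or Smoller) gives $\boldsymbol{u}(t,\boldsymbol{x})\ge 0$ for all $t\in[0,T_{\max})$.

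Then I would establish boundedness, which also upgrades $T_{\max}=\infty$. The natural strategy is to bound the total populations first. Adding the three mosquito equations gives $\partial_t N_V = D_1\Delta N_V$ (the reaction terms telescope exactly, since recruitment $\mu_V N_V$ cancels the losses $\mu_V(S_V+E_V+I_V)$), so by the maximum principle $N_V(t,\boldsymbol{x})\le \|N_V(0,\cdot)\|_\infty$; hence $S_V,E_V,I_V$ are all bounded. Similarly adding the susceptible-through-recovered equations for resident birds gives $\partial_t N_{rB} = D_2\Delta N_{rB} - \alpha_{rB}\nu_{rB} I_{rB} \le D_2\Delta N_{rB}$, so $N_{rB}$ is bounded by its initial sup, and likewise $N_{mB}$ (the advection term also integrates away under Neumann conditions, or one uses the maximum principle for the advection-diffusion operator directly). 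The dead compartments $D_{rB},D_{mB}$ satisfy $\partial_t D_{rB}=\alpha_{rB}\nu_{rB}I_{rB}$ with a bounded, integrable right-hand side only if one controls $\int_0^t I_{rB}$; but on any finite interval this is immediate from boundedness of $I_{rB}$, and since $D_{rB}$ is monotone increasing one concludes it stays finite on $[0,\infty)$ (it converges, being the cumulative mortality, and one can note $I_{rB}\to$ small or simply accept linear-in-time growth is still "bounded on $[0,\infty)$" in the sense of not blowing up in finite time). With all components non-negative and bounded on every finite interval, the standard continuation criterion forces $T_{\max}=\infty$, and uniqueness is inherited from the local theory.

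The main obstacle I expect is the treatment of the dead-bird compartments $D_{rB},D_{mB}$: because these equations carry no diffusion and no death term, $D_{rB}$ and $D_{mB}$ are monotone non-decreasing and need not be bounded by a fixed constant for all time --- they are only guaranteed finite on compact time intervals unless one shows $I_{rB},I_{mB}\in L^1(0,\infty)$. I would handle this by either (i) reading the theorem's ``bounded in $[0,\infty)$'' as local boundedness (no finite-time blow-up), which is all that is needed for the global well-posedness and the numerics, or (ii) showing that the infected compartments decay, e.g. by a Lyapunov/integral estimate exploiting that the bird populations are conserved up to the mortality sink, so $\int_0^\infty \alpha_{rB}\nu_{rB} I_{rB}\,dt \le N_{rB}(0,\cdot)<\infty$ pointwise, giving a genuine uniform bound $D_{rB}(t,\boldsymbol{x})\le D_{rB}(0)+N_{rB}(0,\boldsymbol{x})$. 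Option (ii) is clean and I would use it. Everything else is routine application of the quasi-positivity / invariant-region machinery and the maximum principle.
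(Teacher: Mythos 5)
Your overall architecture coincides with the paper's for existence, uniqueness, and positivity: the paper writes the system (with the dead compartments dropped from the solution vector) as $\partial_t X=\mathcal{L}X+f(X)$ on the Banach space $C(\bar{\Omega})$, verifies by an explicit computation that $f$ is locally Lipschitz, and then cites Capasso, Mora and Smoller for local existence--uniqueness and Smoller's invariant-region theorem (Thm.\ 14.14) for non-negativity --- which is precisely your semigroup-plus-quasi-positivity plan, with your appeal to $C^1$ (hence locally Lipschitz) reaction terms replacing the paper's hand computation of a Lipschitz constant. Where you genuinely diverge is boundedness and globality: the paper's proof of this theorem does not address them at all (they are deferred to the companion well-posedness theorem, where the equations are summed and integrated over $\Omega$ to obtain $L^1$ bounds, upgraded to $L^\infty$ via Lemma 3.1 of Peng--Zhao), whereas you bound the totals directly by the maximum principle, exploiting that $N_V$ satisfies a pure diffusion equation and that $N_{rB}$, $N_{mB}$ are subsolutions of (advection--)diffusion equations, and you then invoke the continuation criterion to get $T_{\max}=\infty$. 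Your route is more elementary and self-contained, and it closes the globality step that the paper leaves implicit; the paper's $L^1$-to-$L^\infty$ route has the advantage of giving bounds independent of the initial data for large time.

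One step of your option (ii) does not hold as written: the pointwise bound $\int_0^\infty \alpha_{rB}\nu_{rB}I_{rB}(t,\boldsymbol{x})\,dt\le N_{rB}(0,\boldsymbol{x})$ does not follow from $\partial_t N_{rB}=D_2\Delta N_{rB}-\alpha_{rB}\nu_{rB}I_{rB}$, because at a fixed $\boldsymbol{x}$ the time-integrated Laplacian term has no sign; the inequality only survives after integrating over $\Omega$, where the Neumann conditions eliminate the diffusion (and advection) contributions, and this yields an $L^1(\Omega)$ bound on $D_{rB}$, not a uniform pointwise one. Either retreat to your option (i) --- boundedness on every finite time interval, which is all the continuation argument needs and is consistent with the paper, whose theorem statements in fact exclude $D_{rB}$ and $D_{mB}$ from the solution vector --- or state the cumulative-mortality estimate in integrated form (or derive a pointwise bound via the Duhamel representation, which controls $\int_0^t e^{(t-s)D_2\Delta}\bigl(\alpha_{rB}\nu_{rB}I_{rB}(s)\bigr)\,ds$ rather than $\int_0^t \alpha_{rB}\nu_{rB}I_{rB}(s,\boldsymbol{x})\,ds$). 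With that repair, your argument is complete and somewhat more careful than the paper's own treatment.
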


	\begin{proof}
		
		We write system (\ref{sys2a}) together with the corresponding initial conditions, in the Banach space of continuous functions that include the boundary, $\displaystyle \mathcal{B}=C(\bar{\Omega}),$ as follows \citep{abboubakar2023reaction}:
		
		\begin{equation}\label{er}
			\begin{cases}
				\displaystyle \frac{\partial X(t,\boldsymbol{x})}{\partial t} =&\displaystyle \mathcal{L}X(t,\boldsymbol{x}) + f(X(t,\boldsymbol{x})),\; t>0,\\
				\displaystyle X(t,0) =&\displaystyle X_0\geq \boldmath{0}_{\mathbb{R}^{11}},
		\end{cases}\end{equation}
		
		where 
		
		$$X(t,\boldsymbol{x})=\begin{bmatrix}
			S_V(t,\boldsymbol{x})\\
			E_V(t,\boldsymbol{x})\\
			I_V(t,\boldsymbol{x})\\
			S_{rB}(t,\boldsymbol{x})\\
			E_{rB}(t,\boldsymbol{x})\\
			I_{rB}(t,\boldsymbol{x})\\
			R_{rB}(t,\boldsymbol{x})\\
			S_{mB}(t,\boldsymbol{x})\\
			E_{mB}(t,\boldsymbol{x})\\
			I_{mB}(t,\boldsymbol{x})\\
			R_{mB}(t,\boldsymbol{x})\\
		\end{bmatrix},\; X(0,\boldsymbol{x})=\begin{bmatrix}
			S_V(0,\boldsymbol{x})\\
			E_V(0,\boldsymbol{x})\\
			I_V(0,\boldsymbol{x})\\
			S_{rB}(0,\boldsymbol{x})\\
			E_{rB}(0,\boldsymbol{x})\\
			I_{rB}(0,\boldsymbol{x})\\
			R_{rB}(0,\boldsymbol{x})\\
			S_{mB}(0,\boldsymbol{x})\\
			E_{mB}(0,\boldsymbol{x})\\
			I_{mB}(0,\boldsymbol{x})\\
			R_{mB}(0,\boldsymbol{x})\\
		\end{bmatrix}, \; \mathcal{L}X(t,\boldsymbol{x}) = \begin{bmatrix}
			D_{1}\Delta S_V(t,\boldsymbol{x})\\
			D_{1}\Delta E_V(t,\boldsymbol{x})\\
			D_{1}\Delta I_V(t,\boldsymbol{x})\\
			D_{2}\Delta S_{rB}(t,\boldsymbol{x})\\
			D_{2}\Delta E_{rB}(t,\boldsymbol{x})\\
			D_{2}\Delta I_{rB}(t,\boldsymbol{x})\\
			D_{2}\Delta R_{rB}(t,\boldsymbol{x})\\
			D_{3}\Delta S_{mB}(t,\boldsymbol{x})-\boldsymbol{A}\cdot\nabla S_{mB}(t,\boldsymbol{x})\\
			D_{3}\Delta E_{mB}(t,\boldsymbol{x})-\boldsymbol{A}\cdot\nabla E_{mB}(t,\boldsymbol{x})\\
			D_{3}\Delta I_{mB}(t,\boldsymbol{x})-\boldsymbol{A}\cdot\nabla I_{mB}(t,\boldsymbol{x})\\
			D_{3}\Delta R_{mB}(t,\boldsymbol{x})-\boldsymbol{A}\cdot\nabla R_{mB}(t,\boldsymbol{x})\\
		\end{bmatrix},$$
		
		and	
		
		\begin{eqnarray} \label{g1} 
			f =	\begin{cases}
				f_1=& \mu_VN_V -[\lambda_{rV}(T)+\lambda{mV}(T)+\mu_V]S_V,\nonumber\\			
				f_2=& \left[\lambda_{rV}(T)+\lambda{mV}(T)\right]S_V-(\gamma_V+\mu_V)E_V,\nonumber\\
				f_3=& \gamma_VE_V-\mu_VI_V,\nonumber\\
				f_4=& \mu_{rB}N_{rB}-\left[\lambda_{Vr}(T)+\mu_{rB}\right]S_{rB},\nonumber\\
				f_5=& \lambda_{Vr}(T)S_{rB}-[\gamma_{rB}+\mu_{rB}]E_{rB},\nonumber\\
				f_6=& \gamma_{rB}E_{rB}-[\alpha_{rB}+\mu_{rB}]I_{rB},\nonumber\\
				f_7=& (1-\nu_{rB})\alpha_{rB} I_{rB}-\mu_{rB}R_{rB},\nonumber\\
				f_8=& \mu_{mB}N_{mB}-\left[\lambda_{Vm}(T)+\mu_{mB}\right]S_{mB},\nonumber\\
				f_9=&\lambda_{Vm}(T)S_{mB}-[\gamma_{mB}+\mu_{mB}]E_{mB},\nonumber\\
				f_{10}=& \gamma_{mB}E_{mB}-[\alpha_{mB}+\mu_{mB}]I_{mB},\nonumber\\
				f_{11}=&(1-\nu_{mB})\alpha_{mB}  I_{mB}-\mu_{mB}R_{mB}.\nonumber\\
			\end{cases}
		\end{eqnarray}

		Next, we show that $f$ is locally Lipschitz in $\mathcal{B}.$ Thus we show that 
		$$\forall\; K\subset \mathcal{B},\; \exists L:\quad \|f(X_1)-f(X_2)\|_{\infty}\leq L\|X_1-X_2\|_{\infty},\;\forall X_1,X_2\in K,$$
		where $L$ is the Lipschitz constant. We observe that 
		$\displaystyle X_1-X_2=$
		$$
		\begin{bmatrix}
			\mu_V\left[N_{V_1}-N_{V_2}\right]-\left[\lambda_{rV_1}S_{V_1}-\lambda_{rV_2}S_{V_2}\right]-\left[\lambda_{mV_1}S_{V_1}-\lambda_{mV_2}S_{V_2}\right]-\mu_V(S_{V_1}-S_{V_2})\\
			\left[\lambda_{rV_1}S_{V_1}-\lambda_{rV_2}S_{V_2}\right]+\left[\lambda_{mV_1}S_{V_1}-\lambda_{mV_2}S_{V_2}\right]-(\gamma_V+\mu_V)[E_{V_1}-E_{V_2}]\\
			\gamma_V[E_{V_1}-E_{V_2}]-\mu_V[I_{V_1}-I_{V_2}]\\
			\mu_{rB}[N_{rB_1}-N_{rB_2}]-(\lambda_{Vr_1}S_{rB_1}-\lambda_{Vr_2}S_{rB_2})-\mu_{rB}[S_{rB_1}-S_{rB_2}]\\
			\lambda_{Vr_1}S_{rB_1}-\lambda_{Vr_2}S_{rB_2}-(\gamma_{rB}+\mu_{rB})[E_{rB_1}-E_{rB_2}]\\
			\gamma_{rB}[E_{rB_1}-E_{rB_2}] - (\alpha_{rB}+\mu_{rB})[I_{rB_1}-I_{rB_2}]\\
			(1-\nu_{rB})\alpha_{rB}[I_{rB_1}-I_{rB_2}]-\mu_{rB}[R_{rB_1}-R_{rB_2}]\\
			\mu_{mB}[N_{mB_1}-N_{mB_2}] - [\lambda_{Vm_1}S_{mB_1}-\lambda_{Vm_2}S_{mB_2}]-\mu_{mB}[S_{mB_1}-S_{mB_2}]\\
			\lambda_{Vm_1}S_{mB_1}-\lambda_{Vm_2}S_{mB_2}-(\gamma_{mB}+\mu_{mB})[E_{mB_1}-E_{mB_2}]\\
			\gamma_{mB}[E_{mB_1}-E_{mB_2}] - (\alpha_{mB}+\mu_{mB})[I_{mB_1}-I_{mB_2}]\\
			(1-\nu_{mB})\alpha_{mB}[I_{mB_1}-I_{mB_2}]-\mu_{mB}[R_{mB_1}-R_{mB_2}]\\	
		\end{bmatrix}.$$

		Simplifying, we obtain $\displaystyle \|f(X_1)-f(X_2)\|_{\infty}=$
		
		\begin{eqnarray}
			&=& \displaystyle\sup_{ x\in\bar{\Omega}}|\mu_V\left[N_{V_1}-N_{V_2}\right]-S_{V_1}[\lambda_{rV_1}-\lambda_{rV_2}]|\nonumber\\
			&\vee&\displaystyle\sup_{ x\in\bar{\Omega}}|-\lambda_{rV_2}[S_{V_1}-S_{V_2}]-S_{V_1}[\lambda_{mV_1}-\lambda_{mV_2}]-\lambda_{mV_2}[S_{V_1}-S_{V_2}]-\mu_V(S_{V_1}-S_{V_2})|\nonumber\\
			&\vee&\displaystyle\sup_{ x\in\bar{\Omega}} |\gamma_V[E_{V_1}-E_{V_2}]-\mu_V[I_{V_1}-I_{V_2}]|  \nonumber\\
			&\vee&\displaystyle\sup_{ x\in\bar{\Omega}} |\mu_{rB}[N_{rB_1}-N_{rB_2}]-\lambda_{Vr_1}[S_{rB_1}-S_{rB_2}]-S_{rB_2}[\lambda_{Vr_1}-\lambda_{Vr_2}]-\mu_{rB}[S_{rB_1}-S_{rB_2}]|  \nonumber\\
			&\vee&\displaystyle\sup_{ x\in\bar{\Omega}} |\lambda_{Vr_1}[S_{rB_1}-S_{rB_2}]+S_{rB_2}[\lambda_{Vr_1}-\lambda_{Vr_2}]-(\gamma_{rB}+\mu_{rB})[E_{rB_1}-E_{rB_2}]| \nonumber\\
			&\vee&\displaystyle\sup_{ x\in\bar{\Omega}} |\gamma_{rB}[E_{rB_1}-E_{rB_2}] - (\alpha_{rB}+\mu_{rB})[I_{rB_1}-I_{rB_2}]| \nonumber\\
			&\vee&\displaystyle\sup_{ x\in\bar{\Omega}} |(1-\nu_{rB})\alpha_{rB}[I_{rB_1}-I_{rB_2}]-\mu_{rB}[R_{rB_1}-R_{rB_2}]|\nonumber\\
			&\vee&\displaystyle\sup_{ x\in\bar{\Omega}} |\mu_{mB}[N_{mB_1}-N_{mB_2}]-\lambda_{Vm_1}[S_{mB_1}-S_{mB_2}]-S_{mB_2}[\lambda_{Vm_1}-\lambda_{Vm_2}]-\mu_{mB}[S_{mB_1}-S_{mB_2}]|  \nonumber\\
			&\vee&\displaystyle\sup_{ x\in\bar{\Omega}} |\lambda_{Vm_1}[S_{mB_1}-S_{mB_2}]+S_{mB_2}[\lambda_{Vm_1}-\lambda_{Vm_2}]-(\gamma_{mB}+\mu_{mB})[E_{mB_1}-E_{mB_2}]| \nonumber\\
			&\vee&\displaystyle\sup_{ x\in\bar{\Omega}} |\gamma_{mB}[E_{mB_1}-E_{mB_2}] - (\alpha_{mB}+\mu_{mB})[I_{mB_1}-I_{mB_2}]| \nonumber\\
			&\vee&\displaystyle\sup_{ x\in\bar{\Omega}} |(1-\nu_{mB})\alpha_{mB}[I_{mB_1}-I_{mB_2}]-\mu_{mB}[R_{mB_1}-R_{mB_2}]|.\nonumber\\
		\end{eqnarray}

		By the triangular inequality, and some simplifications, we arrive at
		
		$$\|f(X_1)-f(X_2)\|_{\infty}\leq \left(\mu_V+\nu_{rB}\alpha_{rB}+\nu_{mB}\alpha_{mB}\right)\|X_1-X_2\|_{\infty},$$
		
		and thus $f$ is locally Lipschitz in $\mathcal{B}.$ By [\citep{Capasso1993}, Theorem B.17], \citep{Mora1983}, and [\citep{Smoller1983}, Theorem 14.4], there exist a local smooth and unique solution of system (\ref{sys2a}) in $\displaystyle \Omega.$ We observe that system (\ref{sys2a}) can be written in the form of system (14.12) in the book \citep{Smoller1983}, together with initial data defined in system (14.13). By [\citep{Smoller1983}, Theorem (14.14)], the solutions of system (\ref{sys2a}) are always positive.	
	\end{proof}

	\subsection{Well-posedness of the model in a feasible region}
	
	\begin{theorem}
		System (\ref{sys2a}) is well-posed mathematically and biologically, and for any 
		
		$\dis \left(S_V(0,x),E_V(0,x),I_V(0,x),S_{rB}(0,x), E_{rB}(0,x), I_{rB}(0,x), R_{rB}(0,x), S_{mB}(0,x), E_{mB}(0,x),\right.$
		
		$\left. I_{mB}(0,x),R_{mB}(0,x)\right)\in \mathbb{X},$ system (\ref{sys2a}) admits a unique positive solution: $\dis \left(S_V(t,x),E_V(t,x),I_V(t,x),\right.$
		
		$\left.S_{rB}(t,x),E_{rB}(t,x), I_{rB}(t,x), R_{rB}(t,x), S_{mB}(t,x),E_{mB}(t,x), I_{mB}(t,x), R_{mB}(t,x)\right)\in \mathbb{X},$ satisfying:
		
		$\dis (S_V(t,x),E_V(t,x),I_V(t,x),S_{rB}(t,x), E_{rB}(t,x), I_{rB}(t,x), R_{rB}(t,x),S_{mB}(t,x),E_{mB}(t,x), I_{mB}(t,x), R_{mB}(t,x) \in C^{1,2}((0,\infty)\times\bar{\Omega}) \times C^{1,2}((0,\infty)\times\bar{\Omega}),\;\text{where}\; \mathbb{X}:=C(\bar{\Omega})\times C(\bar{\Omega}).$
		
		Moreover, there exist another constant $C_1>0$ independent of initial data, such that the solution $\dis (S_V(t,x),E_V(t,x),I_V(t,x),S_{rB}(t,x), E_{rB}(t,x), I_{rB}(t,x), R_{rB}(t,x),S_{mB}(t,x), E_{mB}(t,x), I_{mB}(t,x), R_{mB}(t,x))$ satisfies:
		
		$\|S_V(t,x)\|_{L^{\infty}(\Omega)}+\|E_V(t,x)\|_{L^{\infty}(\Omega)} +\|I_V(t,x)\|_{L^{\infty}(\Omega)} +\|S_{rB}(t,x)\|_{L^{\infty}(\Omega)} +\|E_{rB}(t,x)\|_{L^{\infty}(\Omega)}+\\ \|I_{rB}(t,x)\|_{L^{\infty}(\Omega)}+\|R_{rB}(t,x)\|_{L^{\infty}(\Omega)} +\|S_{mB}(t,x)\|_{L^{\infty}(\Omega)} +\|E_{mB}(t,x)\|_{L^{\infty}(\Omega)} +\|I_{mB}(t,x)\|_{L^{\infty}(\Omega)}+\\\|R_{mB}(t,x)\|_{L^{\infty}(\Omega)} \leq C_1,\;\text{for all}\; t>T_0>0.$
		
	\end{theorem}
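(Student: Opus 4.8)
The plan is to read the statement as a global well-posedness and uniform-bound result for a semilinear parabolic system, building on the local existence, uniqueness and positivity already proved above, and then to upgrade in three stages: global existence via mass-type a priori bounds, $C^{1,2}$ regularity via parabolic bootstrapping, and the ultimate bound via a dissipativity argument.

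First I would recall that the preceding theorem, through the local Lipschitz property of $f$ and [\citep{Smoller1983}, Theorems 14.4 and 14.14], already provides a unique, non-negative, classical solution $X$ on a maximal interval $[0,T_{\max})$ with $X(t,\cdot)\in\mathbb{X}$; since the operator $\mathcal{L}$ (Neumann Laplacian, together with the bounded first-order advection term in the migratory block) is sectorial on $\mathbb{X}$, the solution is given by the variation-of-constants formula, it is smoothing for $t>0$, and it depends continuously on the initial data by a Gronwall argument on the integral equation, which already gives mathematical well-posedness locally. It then remains to show $T_{\max}=\infty$ with bounds uniform on $[0,\infty)$ (biological well-posedness), the claimed regularity, and the ultimate bound $C_1$.

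For global existence and the uniform-in-time bound, I would exploit the almost-conservative structure of the total populations. Summing the three mosquito equations makes the recruitment $\mu_V N_V$ cancel the mortality losses and leaves $\partial_t N_V = D_1\Delta N_V$ with homogeneous Neumann data, so the parabolic maximum principle gives $\|N_V(t)\|_{L^\infty(\Omega)}\le\|N_V(0)\|_{L^\infty(\Omega)}$ for all $t$. Summing the four living resident-bird equations yields $\partial_t N_{rB}=D_2\Delta N_{rB}-\nu_{rB}\alpha_{rB}I_{rB}\le D_2\Delta N_{rB}$, hence $\|N_{rB}(t)\|_{L^\infty}\le\|N_{rB}(0)\|_{L^\infty}$, and likewise $\|N_{mB}(t)\|_{L^\infty}\le\|N_{mB}(0)\|_{L^\infty}$; the advection term $-\boldsymbol{A}\cdot\nabla N_{mB}$ is first order with bounded coefficients and vanishes at an interior spatial maximum, so it does not obstruct the maximum principle under Neumann conditions. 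Because every listed compartment is non-negative and bounded above by its total, the whole state stays bounded in $\mathbb{X}$ by $\max\{\|N_V(0)\|_\infty,\|N_{rB}(0)\|_\infty,\|N_{mB}(0)\|_\infty\}$ on $[0,T_{\max})$, which rules out finite-time blow-up; thus $T_{\max}=\infty$ and the solution is globally bounded. For the $C^{1,2}$ regularity I would then bootstrap: $f$ is a quadratic polynomial in the components with bounded, Hölder-continuous coefficients arising from $\beta(T,\boldsymbol{x}),\gamma_V(T,\boldsymbol{x}),\mu_V(T,\boldsymbol{x})$, so $\mathcal{L}X+f(X)$ lies in $L^p((0,\infty)\times\Omega)$ for every finite $p$; interior-boundary $L^p$ parabolic estimates and Sobolev embedding give Hölder continuity in space-time, and Schauder estimates upgrade $X$ to $C^{1,2}((0,\infty)\times\bar{\Omega})$.

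The main obstacle is the final, initial-data-independent bound. Since $\mathcal{L}$ has no zero-order dissipation, the maximum-principle bounds above only propagate the initial size, and the mass functionals $\int_\Omega N_V\,dx$ (conserved), $\int_\Omega N_{rB}\,dx$ and $\int_\Omega N_{mB}\,dx$ (non-increasing via the $-\nu\alpha I$ losses) relax to their spatial averages rather than to zero. To produce a constant $C_1$ depending only on the parameters, the $D_i$, $\boldsymbol{A}$ and $\Omega$, I would combine (i) a Lyapunov/comparison estimate for the infected subsystem $(E_V,I_V,E_{rB},I_{rB},E_{mB},I_{mB})$, whose linear part driven by the bounded susceptible pools is cooperative, to show those components are eventually dominated by a fixed constant on the feasible region where susceptibles do not exceed the carrying capacities, with (ii) the ultracontractive smoothing $\|e^{tD\Delta}g\|_{L^\infty}\le c(t)\|g\|_{L^1}$ of the Neumann heat semigroup applied to the $S$ and $R$ equations for $t\ge T_0$. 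I expect step (i) to be the delicate point, because making the decay of the epidemic quantitative and uniform in the initial data is exactly where the conservative structure works against the claim; a clean fallback, if a genuinely data-independent $C_1$ cannot be secured, is to state the ultimate bound on the invariant biologically feasible region fixed by the carrying capacities, on which $C_1$ is by construction independent of the particular initial distribution.
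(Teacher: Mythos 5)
Your first three stages --- local existence and uniqueness inherited from the Lipschitz estimate, positivity, global continuation via bounds on the totals, and a bootstrap to $C^{1,2}$ regularity --- run parallel to the paper's proof, which reaches the same conclusions more compactly by citing parabolic regularity theory \citep{Pao1993} and the strong maximum principle \citep{Protter1984}, and which controls growth by summing the equations and integrating over $\Omega$ to bound the $L^1$ norms on the maximal existence interval (your pointwise maximum-principle bounds on $N_V$, $N_{rB}$, $N_{mB}$ are a sharper form of the same observation and are fine). The genuine gap is in the last step, which is where the theorem actually lives: you do not prove the ultimate bound $C_1$; you only sketch a two-part strategy (a Lyapunov estimate for the infected subsystem plus $L^1$-to-$L^\infty$ heat-semigroup smoothing) and then explicitly concede it may fail to give a data-independent constant, offering as a fallback a restriction to the invariant region fixed by the carrying capacities --- which would establish a weaker statement than the one asserted. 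The paper closes this step in one line: by positivity and the uniform $L^1$ bounds, [\citep{Peng2012}, Lemma 3.1] with $\sigma=p_0=1$ yields the ultimate $L^{\infty}$ bound for $t>T_0$; that lemma is exactly the Duhamel/ultracontractivity mechanism you gesture at, packaged so that no separate decay argument for the infected compartments is needed. As written, your proposal therefore does not prove the stated theorem.

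That said, your skepticism about genuine independence of the initial data is mathematically well-founded and exposes an imprecision that the paper's one-line citation glosses over: summing the mosquito equations shows $\int_\Omega N_V\,dx$ is conserved (and the bird masses are non-increasing), so the $L^1$ bound fed into the lemma is precisely the initial total mass, and the resulting ultimate $L^{\infty}$ constant can at best be independent of the initial spatial distribution, not of the total population sizes. Your fallback formulation (ultimate bound on the biologically feasible region determined by the total masses or carrying capacities) is the honest version of the claim; to match the paper's argument you should have invoked, or reproved via an Alikakos--Moser type iteration using the Neumann heat-kernel estimate you already quote, the $L^1$-to-$L^\infty$ ultimate-boundedness lemma, rather than leaving the epidemic-decay step open.
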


	\begin{proof}
		
		We follow the approach presented in \citep{Peng2012, Wang2022}. Using the regularity theory of parabolic PDEs \citep{Pao1993}, system (\ref{sys2a}) admits a unique non-negative classical solution 
		
		$$\dis (S_j(t,x), E_j(t,x), I_j(t,x), R_j(t,x)) \in C^{1,2}((0,T_m)\times\bar{\Omega}) \times C^{1,2}((0,T_m)\times\bar{\Omega}),$$
		
		where $T_m$ represents the maximal existence time of the solution. By the strong maximum principle \citep{Protter1984}, then $S_j(t,x), E_j(t,x), I_j(t,x), R_j(t,x),$ where $\dis j=rB,mB$ are positive in $\dis (0,T_m)\times\bar{\Omega}.$
		
		Summing up the equations and integrating over 
		
		$\dis \Omega,$ we get $\displaystyle \int_{\Omega} \left[S_j(t,x)+E_j(t,x)+I_j(t,x)+R_j(t,x)\right] \;dx=N_{j_0}>0.$
		
		$\dis N_{j_0} = \displaystyle \int_{\Omega} \left[S_j(t,x)+E_j(t,x)+I_j(t,x)+R_j(t,x)\right] \;dx, \implies \dis \|S_j(t,\cdot)\|_{L^1(\Omega)},$ $\dis \|E_V(t,\cdot)\|_{L^1(\Omega)}$ $\dis \|I_V(t,\cdot)\|_{L^1(\Omega)}$ and $\dis \|R_j(t,x)\|_{L^1(\Omega)}$ are bounded for all $\dis 0<t<T_m.$ By the positivity of $S_j(t,\cdot),$ $E_j(t,\cdot),$ $I_j(t,\cdot),$ $R_j(t,\cdot),$ and [\citep{Peng2012}, Lemma (3.1)], with $\sigma=p_0=1,$ we conclude that there exist a positive constant $C_1$ that does not depend on initial data such that the solution $S_j, E_j, I_j, R_j$ satisfies
		$$\|S_j(t,\cdot)\|_{L^{\infty}(\Omega)} +  \|E_j(t,\cdot)\|_{L^{\infty}(\Omega)}+  \|I_j(t,\cdot)\|_{L^{\infty}(\Omega)} +  \|R_j(t,\cdot)\|_{L^{\infty}(\Omega)}\leq C_1,\;\forall t>T_0.$$
		
		Similar arguments can be made for the mosquito population, thus we conclude that system (\ref{sys2a}) is well-posed.
	\end{proof}

	\section{Steady states and the reproductive number}
	
	The PDE system (\ref{sys2a}) admits a non-trivial WNV-free equilibrium point $\mathcal{E}^0$ such that 
	
	$$\left[S_V,E_V,I_V,S_{rB},E_{rB},I_{rB},R_{rB},D_{rB}, S_{mB},E_{mB},I_{mB},R_{mB},D_{mB}\right]$$$$=\left[N_V^0, 0,0,N_{rB}^0,0,0,0,0,N_{mB}^0,0,0,0,0\right].$$ The uniqueness of $\mathcal{E}^0$ can be verified using [\citep{Wang2018}, Lemma 2.1]. The basic reproductive number of the deterministic part of our model is computed following \cite{vandenDriessche2002}, as follows:
	
Let $\mathcal{F},$ be the matrix of new infections, and $\mathcal{V},$ the matrix of transitions, then
	
	$$\mathcal{F}=\renewcommand{\arraystretch}{1.6}\left[\setlength\arraycolsep{1.0pt}
	\begin{array}{c}
		(\lambda_{rV}+\lambda_{Vm})S_V \\
		0\\
		\lambda_{Vr}S_{rB}\\
		0\\
		\lambda_{Vm}S_{mB}\\
		0\\
	\end{array}\right],\quad
	\mathcal{V}=\renewcommand{\arraystretch}{1.4}\left[\setlength\arraycolsep{1.3pt}
	\begin{array}{c}
		(\gamma_V+\mu_V)E_V \\
		-\gamma_VE_V+\mu_VI_V\\
		(\gamma_{rB}+\mu_{rB})E_{rB}\\
		-\gamma_{rB}E_{rB}+(\alpha_{rB}+\mu_{rB})I_{rB}\\
		(\gamma_{mB}+\mu_{mB})E_{mB}\\
		-\gamma_{mB}E_{mB}+(\alpha_{mB}+\mu_{mB})I_{mB}\\
	\end{array}\right].
	$$
	
Partial derivatives evaluated at the WNV-free equilibrium point yield:

	$$F=\left[\renewcommand{\arraystretch}{1.6}
	\begin{array}{cccccccccc}
		0 & 0 & 0 &\dis  \frac{\phi_{rB}\beta  N^0_V p_{rV}}{N_{rB}} & 0 & \dis \frac{\phi_{mB}\beta  p_{mV} N^0_V}{N_{mB}} \\
		0 & 0 & 0 & 0 & 0 & 0  \\
		0 &\dis  \frac{\beta  N^0_{rB} p_{Vr}}{N_V} & 0 & 0 & 0 & 0  \\
		0 & 0 & 0 & 0 & 0 & 0 \\
		0 &\dis  \frac{\beta  N^0_{mB} p_{Vm}}{N_V} & 0 & 0 & 0 & 0 \\
		0 & 0 & 0 & 0 & 0 & 0  \\
	\end{array}
	\right],$$
	
	and

	$$V=\left[\renewcommand{\arraystretch}{1.2}\setlength\arraycolsep{3pt}
	\begin{array}{cccccc}
		-\gamma _V-\mu _V & 0 & 0 & 0 & 0  &0\\
		\gamma _V & -\mu _V & 0 & 0 & 0&0 \\
		0 & 0 & -\gamma _{rB}-\mu _{rB} & 0 & 0 &0\\
		0 & 0 & \gamma _{rB} & -\alpha _{rB}-\mu _{rB} & 0 &0\\
		0 & 0 & 0 & 0 & -\gamma _{mB}-\mu _{mB} &0\\
		0 & 0 & 0 & 0 & \gamma _{mB}&-\alpha _{mB}-\mu _{mB} 
	\end{array}
	\right].$$

	The basic reproductive number $(R_0(\boldsymbol{x}))$ is thus given by the spectral radius of $\displaystyle FV^{-1},$ which gives
	
	\begin{equation}
		R_{0}(\boldsymbol{x}) =\sqrt{	\underbrace{\left(\frac{\phi_{rB}\beta p_{r_V}\gamma_V}{\mu_V(\gamma_V+\mu_V)}\right)\cdot\left(\frac{\beta p_{V_r}\gamma_{rB}}{(\alpha_{rB}+\mu_{r_B})(\gamma_{rB}+\mu_{rB})}\right)}_{\text{Infections due to resident birds}}   +   	\underbrace{\left(\frac{\phi_{mB}\beta p_{m_V}\gamma_V}{\mu_V(\gamma_V+\mu_V)}\right)\cdot\left(\frac{\beta p_{V_m}\gamma_{mB}}{(\alpha_{mB}+\mu_{m_B})(\gamma_{mB}+\mu_{mB})}\right)}_{\text{Infections due to migratory birds}}}.
	\end{equation}
	
	The basic reproductive number can be expressed as $\dis R_0 = \sqrt{R_{0}^{Res}+R_{0}^{Mig}},$ where 
	
	$$\dis 
	R_{0}^{Res} =\left(\frac{\phi_{rB}\beta p_{r_V}\gamma_V}{\mu_V(\gamma_V+\mu_V)}\right)\times\left(\frac{\beta p_{V_r}\gamma_{rB}}{(\alpha_{rB}+\mu_{r_B})(\gamma_{rB}+\mu_{rB})}\right)\quad\text{and}$$
	
	$$R_{0}^{Mig} = \left(\frac{\phi_{mB}\beta p_{m_V}\gamma_V}{\mu_V(\gamma_V+\mu_V)}\right)\times\left(\frac{\beta p_{V_m}\gamma_{mB}}{(\alpha_{mB}+\mu_{m_B})(\gamma_{mB}+\mu_{mB})}\right).$$

	Since all parameters of the model are positive, and the equilibrium point is also positive, then $R_{0}^{Res}$ and $R_{0}^{Mig}$ are non-negative. From this, we observe that 
	
	\begin{equation}\label{rn}
		\sqrt{R_{0}^{Res}+R_{0}^{Mig}}\geq\displaystyle\sqrt{R_{0}^{Res}},
	\end{equation} 
	
with equality when $R_{0}^{Mig}=0,$ which shows the contribution of migratory birds in increasing the amplifying host population. In Figure (\ref{fb}), the basic reproductive number $R_0(\boldsymbol{x})$ is plotted for the entire country and compared to the observed data for the years 2018-2024. $R_0(\boldsymbol{x})$ shows a bigger hotspot in the eastern parts of Germany, with other circulation areas observed in the Southeastern towards the Northern parts, which is in agreement with the observed data. Notably, we also included the 2025 case data to evaluate whether our $R_0(\boldsymbol{x})$ risk map can predict regions where future cases may occur. Indeed, 2025 WNV cases as of November 2025 were observed in the east, north and southwest of Germany, in agreement with the high risk areas flagged by $R_0(\boldsymbol{x})$. Such a result is important as it helps identify future high-risk regions before seasonal outbreaks occur.
	
	\begin{figure}[H]
		\centering
		\includegraphics[width=1.1\textwidth]{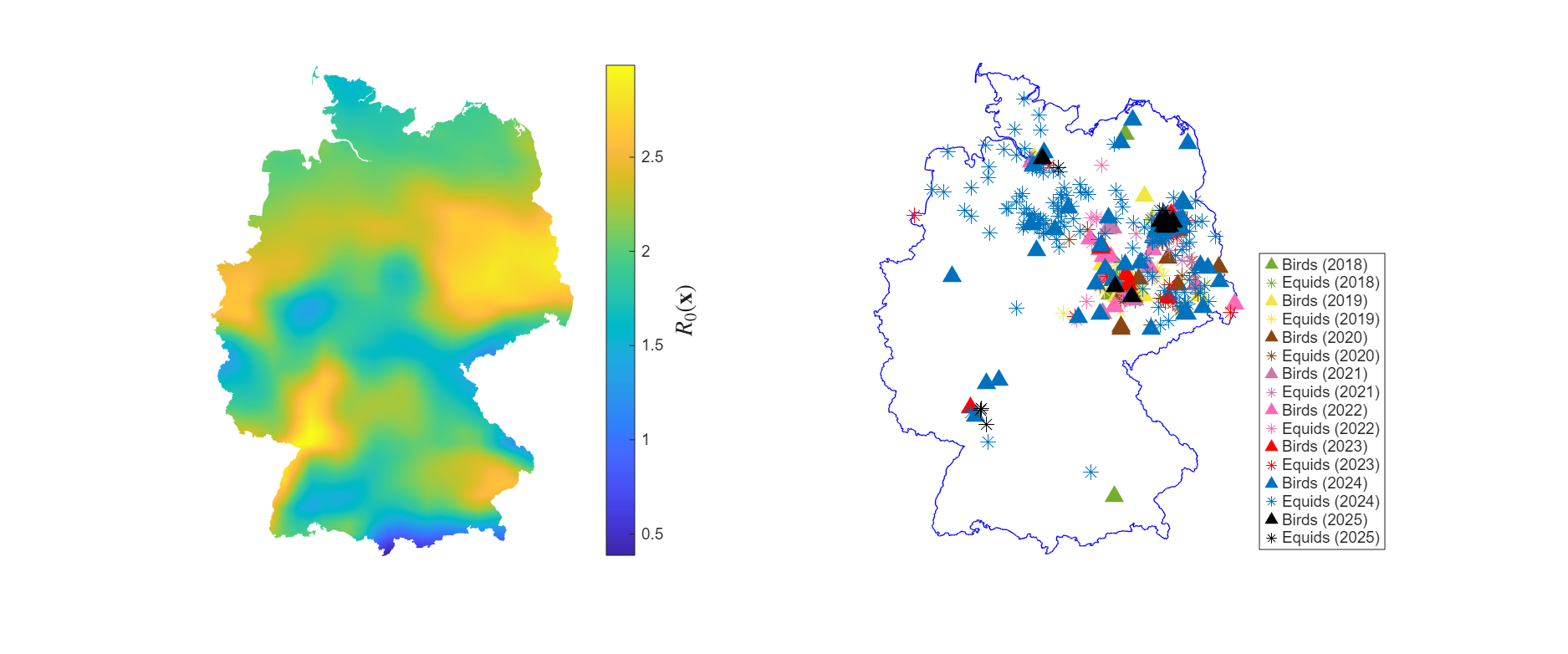}
		\caption{\small The basic reproductive number $R_0(\boldsymbol{x})$ compared to observed data.}
		\label{fb}
	\end{figure}

\section{Simulation framework}\label{rs}
	
Based on available WNV data in Germany, birds of the Accipitridae family were the most affected bird species in 2018 \citep{Ziegler2022}. Therefore, as also done by \cite{Mbaoma2024}, we used parameters associated with northern goshawks and raptors for the resident bird, both of which fall under the Accipitridae family. Initial population densities of infectious birds and mosquitoes are estimated to be high in Halle (Saale), the main WNV hot-spot in 2018, followed by Berlin, and lastly smaller densities in Laage, Poing, and Plessa, as indicated by the 2018 WNV observed cases data \citep{Ziegler2020}. On the other hand, initial densities of susceptible hosts and mosquitoes are assumed to be spatially homogeneously distributed across the entire country. A 2-dimensional Gaussian function of the form 

\begin{equation*}
	f(t,\boldsymbol{x})= e^{\dis -\frac{(x-x_0)^2+(y-y_0)^2}{2\sigma^2}}
\end{equation*}

is applied to define the initial condition peaks. $(x_0,y_0)$ are the central coordinates defining the nodes of the infected regions, and $\sigma$ is the standard deviation. $\sigma=40$ km in Berlin and Halle Saale (main hotspot regions), and 20 km in other infected regions.

Our model is simulated from 1 March 2019 until 30 November 2024 using a daily time step. This is because no WNV activity is observed during December to February, but to account for seasonal migration patterns of migratory birds, we include the spring and autumn months. During the model simulation process, the final solution of each year is saved and imported as initial conditions for the following year. This is done to account for the inflow of new data through ongoing WNV surveillance activities. Moreover, we are able to update initial conditions for the year 2020, where a new case of WNV was confirmed in the north of Germany (Hamburg), and cases in this region may have played a role in the spreading pattern observed in the north of Germany during the subsequent years. In the process, we use inverse modeling to infer the diffusion and advection coefficients that could have contributed to the observed spreading pattern. This is done by running the model with different coefficients while visually comparing the model solution with observed data. The final coefficients are chosen based on their ability to reproduce the observed pattern, while respecting the biology of the magnitudes, i.e., $\dis D_{1}<D_{2}<D_{3}.$ The use of such an approach is mainly because available data is limited and sparse, and thus we are mainly interested in explaining the observed qualitative WNV spreading pattern. 

Temperature strongly influences the activity of \textit{Cx. pipiens} including their competence in transmitting WNV \citep{Ciota2014}. This motivated us to model mosquito parameters using temperature-dependent functions, namely: mosquito biting rate $\beta(T,\boldsymbol{x}),$ latency rate  $\gamma_V(T,\boldsymbol{x}),$ and mortality rate $\mu_V(T,\boldsymbol{x}).$ ERA5 reanalysis data on single levels from 1940 to present is downloaded from the Copernicus Climate Data Store \citep{copcds}. The data is first cleaned, and the years 2019-2024 are subsequently extracted for interpolation onto the mesh and used to compute the temperature-space dependent mosquito functions. During the simulation process, $\beta(T,\boldsymbol{x}), \gamma_V(T,\boldsymbol{x})$  and $\mu(T,\boldsymbol{x})$ are dynamically updated based on the temperatures at different locations in Germany. Advection representing the directed movement of migratory birds is fixed throughout the years. The spring to early summer movement is described by the vector field $[v_x;v_y]=[1;1]$ while the late summer to autumn is described by $[v_x;v_y]=[-1;-1]$ in km per day. To solve our PDE model, we use the Matlab PDEToolbox \citep{MATLAB2023}, which uses a finite element algorithm. The toolbox solves PDEs of the form
	\begin{equation}
		m\frac{\partial^2u}{\partial t^2}+d\frac{\partial u}{\partial t}-\nabla \cdot\left(c\nabla u\right)+au=f,
	\end{equation}
	
and in our case, $m=0, d=1,$ $a=0.$ Matrices $c$ and $f$ capture the diffusion coefficients and reaction terms, respectively, and they are displayed in supplementary (S4), together with the pseudo-code for the simulation procedure. The map of Germany is downloaded as a shape file from \citep{GADM} and simplified using mapshaper \citep{Harrower2006}. The observed data indicates presence data only, i.e., no absence data is available. For this reason, we qualitatively present our results (Low - High) to interpret the model solution. 
	
\section{Model validation}
	
To validate our model predictions, we compared simulated cases with observed data. The validation data comprises bird and equid infection events, as well as mortality, obtained from the \cite{klji}. Due to the lack of a systematic WNV surveillance in Germany, many dead birds go unnoticed, while some are discovered at very advanced stages that affect PCR tests. Domesticated equids are usually easier to diagnose as owners often report their sick animals. Given that birds are the amplifying hosts, it is reasonable to make an assumption that before equid cases are reported, there is an amplifying host that was infected first and might have gone undetected. Moreover, our observed bird data is dominated by dead birds, while the equid data is dominated by infected equids. Because of this, our validation dataset compares merged WNV-related events observed in birds and equids, with the compartment of dead resident birds, given the high WNV-related deaths observed in birds of the Accipitridae family \citep{Mbaoma2024, Ziegler2022}. Firstly, we aggregate both the simulated and observed data within the GADM level-2 administrative units of Germany (N = 356) \citep{GADM}, comprising both rural districts (Landkreise) and urban districts (kreisfreie Städte). The PDE solution $u(x,y,t_f)$ is extracted at the final time step, and the domain $\Omega$ is divided into 365 sub domains $\dis \left\{\Omega_i\right\}_{i=1}^N$ corresponding to the GADM level-2 administrative units, such that $\dis \Omega=\bigcup_{i=1}^{N} \Omega_i.$ For each $\Omega_i,$ the mean simulated number of cases is given by $\dis \frac{1}{|\Omega_i|}\int_{\Omega_i} u(x,y,t_f)\; dA,$ where $|\Omega_i|$ is the area of the region $\Omega_i.$

Observed cases are aggregated using the formula $\dis \sum_{j=1}^{N_\text{obs}} \mathbf{1}_{\Omega_i}(x_j, y_j,t_f),$ where $\mathbf{1}_{\Omega_i}(x_j, y_j,t_f)$ is the indicator function equal to 1 if the observed point lies within $\Omega_i$, and 0 otherwise, and $N_\text{obs}$ is the number of observed points per region.
	
Boundaries are shrunk inwards using the \texttt{polybuffer} function \citep{MATLAB2023} from the toolbox to prevent misclassifications near borders. The \texttt{isinterior} operator \citep{MATLAB2023} is then applied to prevent double-counting elements located on regional borders. Thus, only points lying strictly within the interior of a polygon are counted as belonging to that region.
	
Lastly, the Spearman's rank correlation coefficient $(\rho)$ \citep{Spearman1904} is applied to quantify the degree of spatial correspondence between simulated and observed patterns in each district. Spearman's $(\rho)$ is calculated using the formula:

		\begin{equation}
			\rho = 1 - \frac{6 \sum_{i=1}^{N} d_i^2}{N(N^2 - 1)},
	\end{equation}
	
where $d_i$ is the difference between the rank of the simulated and observed points, and $N$ is the number of administrative units. A positive $\rho$ indicates that locations with a high density of simulated cases correspond to locations with a high density of observed points. Conversely, a negative $\rho$ implies that high values in one pattern correspond to low values in the other, indicating an opposite relationship. A $\rho$ close to zero suggests the absence of a monotonic spatial relationship \citep{Spearman1904}. P-values are computed to assess whether the observed correlation occurred by chance, with smaller p-values (<0.05) indicating that the observed correlation is unlikely to have occurred by chance.

\section{Results}\label{fd}
	
The model initialized with 2018 WNV cases data shows strong visual agreement with observed patterns in Germany throughout the 2019-2024 simulation period (Figure \ref{f2}). Its notable strength is the ability to identify key WNV circulation areas which are initially concentrated in the Eastern regions. The RMSD estimates indicate that WNV-mosquito infections reached an average annual dispersal range of 33.1 km in the years 2019-2022. A higher dispersal range of 46.8 km was observed in years 2023-2024 when the virus range expanded to almost the entire country. In migratory birds, an annual RMSD of 104.7 km was observed for all the years except the year 2021 where an RMSD of 93.6 km was observed. On the other hand, a constant annual dispersal range of 74.0 km was observed in resident bird infections for all the years (Table \ref{t03}). The RMSD of mosquitoes remained lower than that of resident and migratory birds, while that of resident birds was lower than that of migratory birds, reflecting the greater mobility of avian hosts compared to mosquito species. 

Spearman's correlation coefficients $(\rho)$ were all statistically significant with the years 2020-2021 showing values greater than 0.5. The other years had values slightly below 0.5, a result partly attributed to the sparsity of the observed data. The corresponding p-values were all below 0.05 (Table \ref{t03}, Figure \ref{g2}). In 2023, the model accurately predicted an isolated case in the Northwestern region (next to the Dutch border) and another isolated case in the Southwest region between the states of Baden-Württemberg, Hessen and Rheinland-Pfalz (Figure \ref{fl01}). This hints at the model's ability to anticipate under-reported spread patterns mainly driven by temperature, diffusion and advection processes (Figure \ref{f2}). During the same year, the model simulated without migratory birds failed to account for these isolated cases (Figure \ref{fl03}). This shows the significance of incorporating migratory bird species and their seasonal movements in our model as this significantly improved the model fit (Figure \ref{fl01}). In general, the spread pattern shows a progressive, wave-like expansion mainly dominated in the eastern regions initially. The wave then travels in an anti-clockwise direction to other parts of Germany. The model's ability to explain the observed pattern suggests that incorporating temperature, diffusion and migratory processes in PDE-based WNV models significantly enhances their predictive capacity.

	\begin{figure}[H]
		\centering
		\begin{subfigure}{0.35\textwidth}
			\centering
			\includegraphics[width=\textwidth]{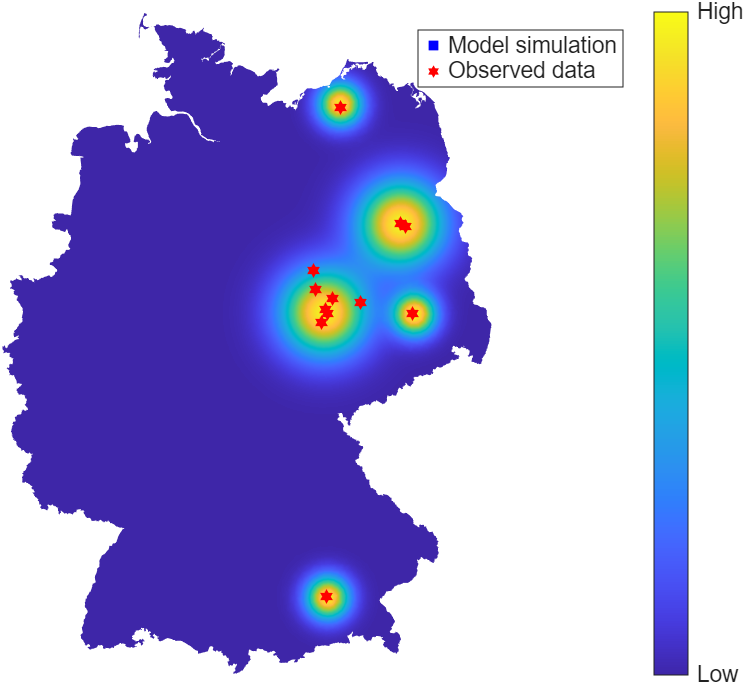}
			\caption{\small 2018 cases.}
			\label{f00}
		\end{subfigure}
		\hfill
		\begin{subfigure}{0.35\textwidth}
			\centering
			\includegraphics[width=\textwidth]{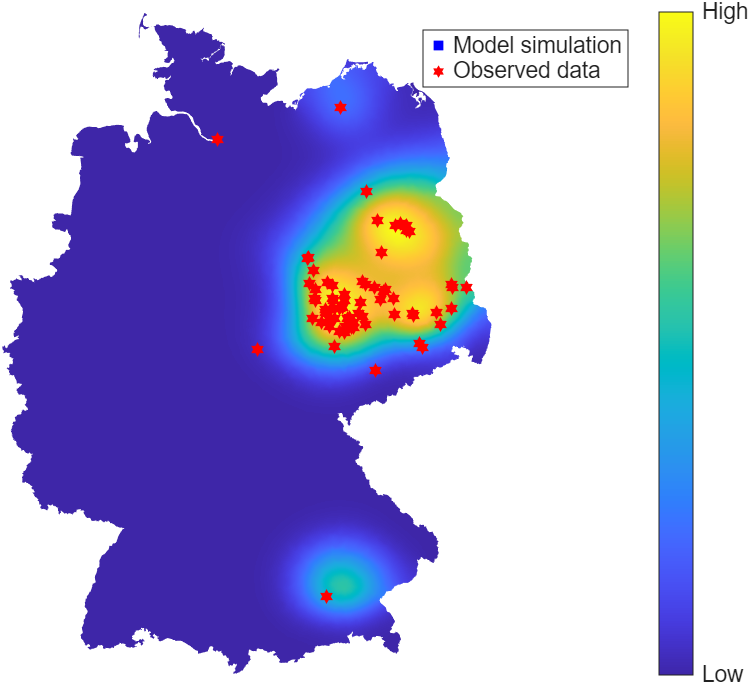}
			\caption{\small 2018-2019 cases.}
			\label{f01}
		\end{subfigure}
		\hfill
		\begin{subfigure}{0.35\textwidth}
			\centering
			\includegraphics[width=\textwidth]{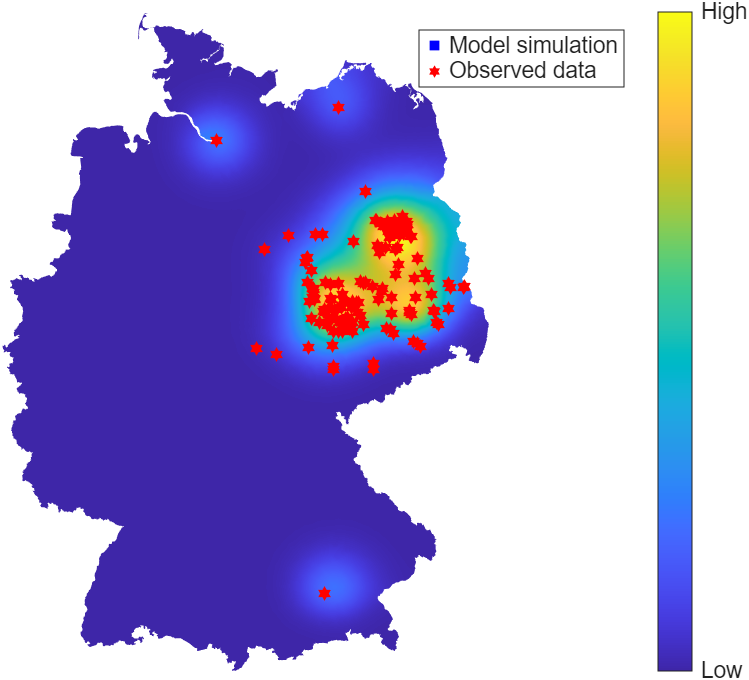}
			\caption{\small 2018-2020 cases.}
			\label{f03}
		\end{subfigure}
		\hfill
		\begin{subfigure}{0.35\textwidth}
			\centering
			\includegraphics[width=\textwidth]{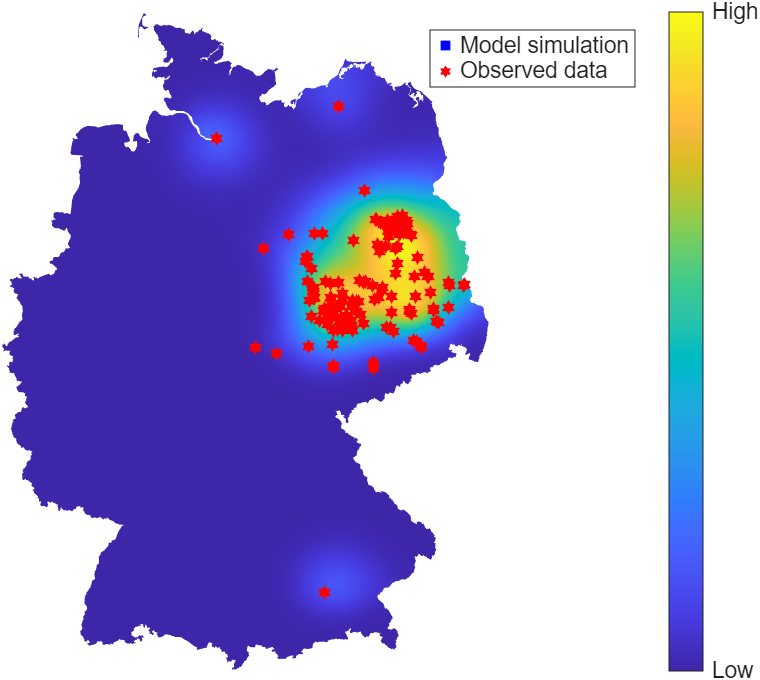}
			\caption{\small 2018-2021 cases.}
			\label{f05}
		\end{subfigure}
		\hfill
		\begin{subfigure}{0.35\textwidth}
			\centering
			\includegraphics[width=\textwidth]{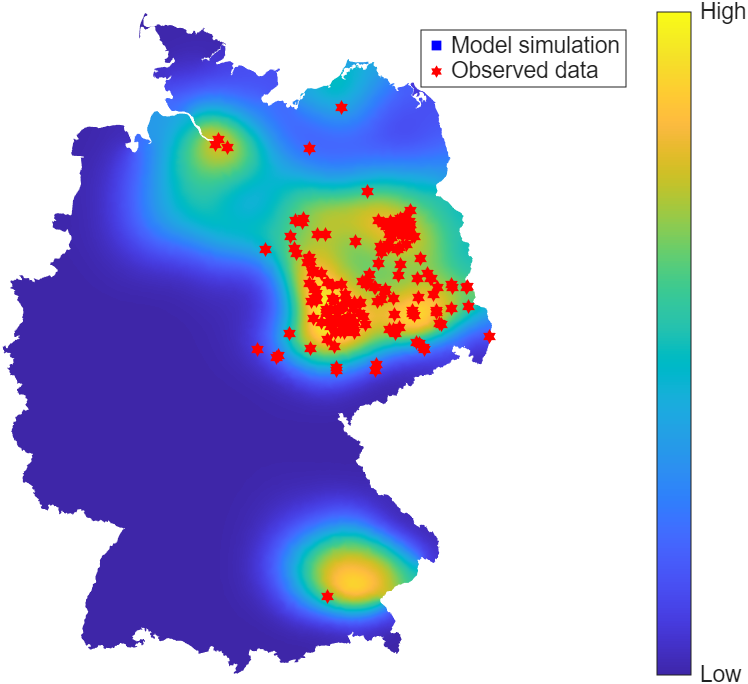}
			\caption{\small 2018-2022 cases.}
			\label{f06}
		\end{subfigure}
		\hfill
		\begin{subfigure}{0.35\textwidth}
			\centering
			\includegraphics[width=\textwidth]{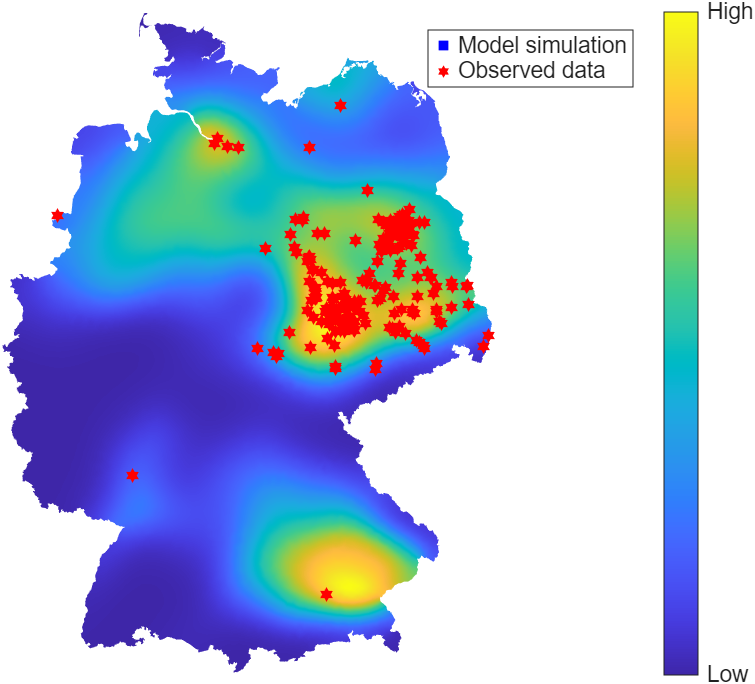}
			\caption{\small 2018-2023 cases.}
			\label{f07}
		\end{subfigure}
		\hfill
		\begin{subfigure}{0.35\textwidth}
			\centering
			\includegraphics[width=\textwidth]{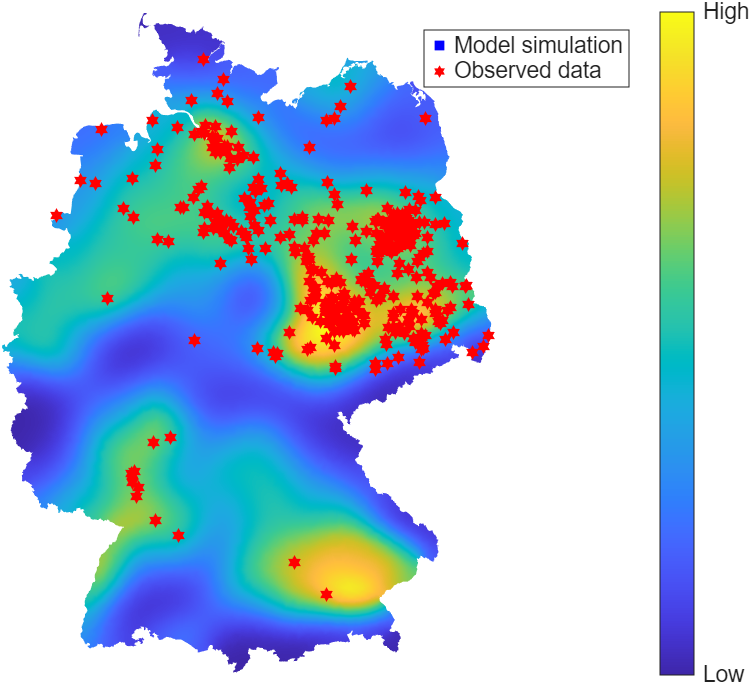}
			\caption{\small 2018-2024 cases.}
			\label{f08}
		\end{subfigure}
		\caption{\small Model simulation compared to observed data of WNV in Germany between the years 2018-2024.}
		\label{f2}
	\end{figure}

	\begin{table}[H]
		\caption{\small Summary values for the diffusion coefficients $(D_1,D_2,D_3),$ in km$^2$ per day; RMSD for each species in km; Spearman's correlation coefficient and P-values.}
		\renewcommand{\arraystretch}{1.1}
		\small
		\centering
		\begin{adjustbox}{width=1.0\textwidth}
			\begin{tabular}{lllllllll}
				\toprule 	
				\textbf{Year}&D$_1$&D$_2$ &D$_3$&RMSD$_1$&RMSD$_2$&RMSD$_3$&$\rho$& P-value\\
				\midrule 
				2019&1.0&5.0&10.0&33.1&74.0&104.7&0.41&$1.6\times 10^{-15}$\\
				2020&1.0&5.0&10.0&33.1&74.0&104.7&0.53&$6.4\times 10^{-26}$\\
				2021&1.0&5.0&8. 0&33.1&74.0&93.6 &0.51&$1.1\times 10^{-23}$\\
				2022&1.0&5.0&10.0&33.1&74.0&104.7&0.47&$3.6\times 10^{-20}$\\
				2023&2.0&5.0&10.0&46.8&74.0&104.7&0.43&$1.5\times 10^{-16}$\\
				2024&2.0&5.0&10.0&46.8&74.0&104.7&0.35&$1.2\times 10^{-11}$\\
				\bottomrule
			\end{tabular}
		\end{adjustbox}
		\label{t03}
	\end{table}

	\begin{figure}[H]
		\begin{subfigure}{0.4\textwidth}
			\centering
			\includegraphics[width=\textwidth]{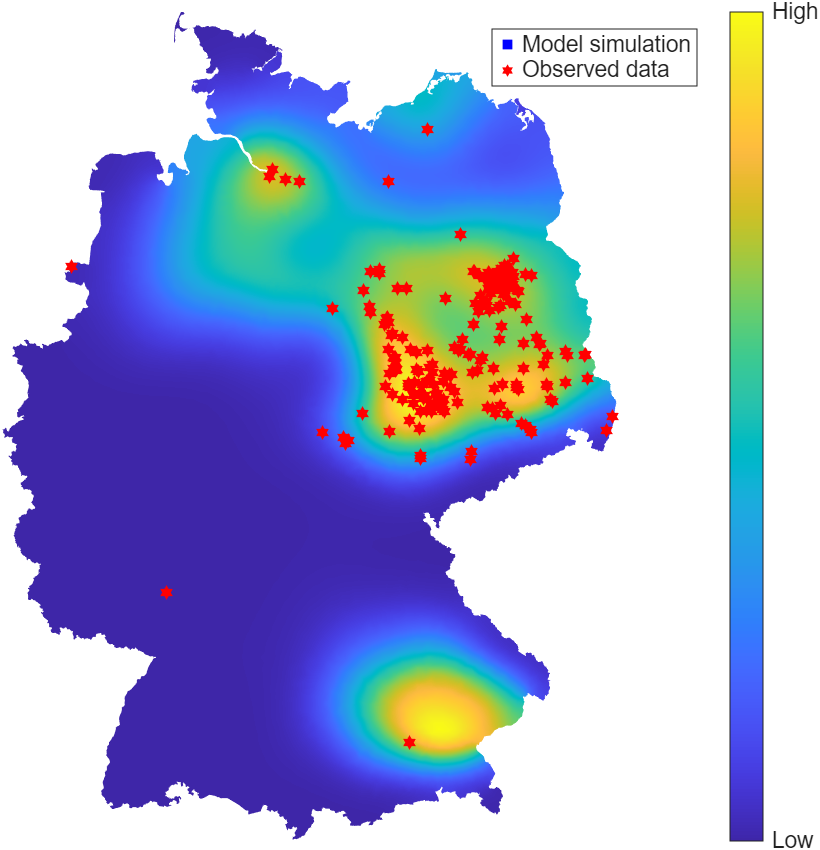}
			\caption{\small Model simulation without migratory birds in 2023.}
			\label{fl03}
		\end{subfigure}
		\hfill
		\begin{subfigure}{0.45\textwidth}
			\centering
			\includegraphics[width=\textwidth]{uv6.png}
			\caption{\small Model simulation with migratory birds in 2023.}
			\label{fl01}
		\end{subfigure}
		\caption{\small Contribution of migratory birds to the WNV spread pattern during the year 2023.}
		\label{fl2}
	\end{figure}

	\begin{figure}[H]
		\centering
		\begin{subfigure}{0.45\textwidth}
			\centering
			\includegraphics[width=\textwidth]{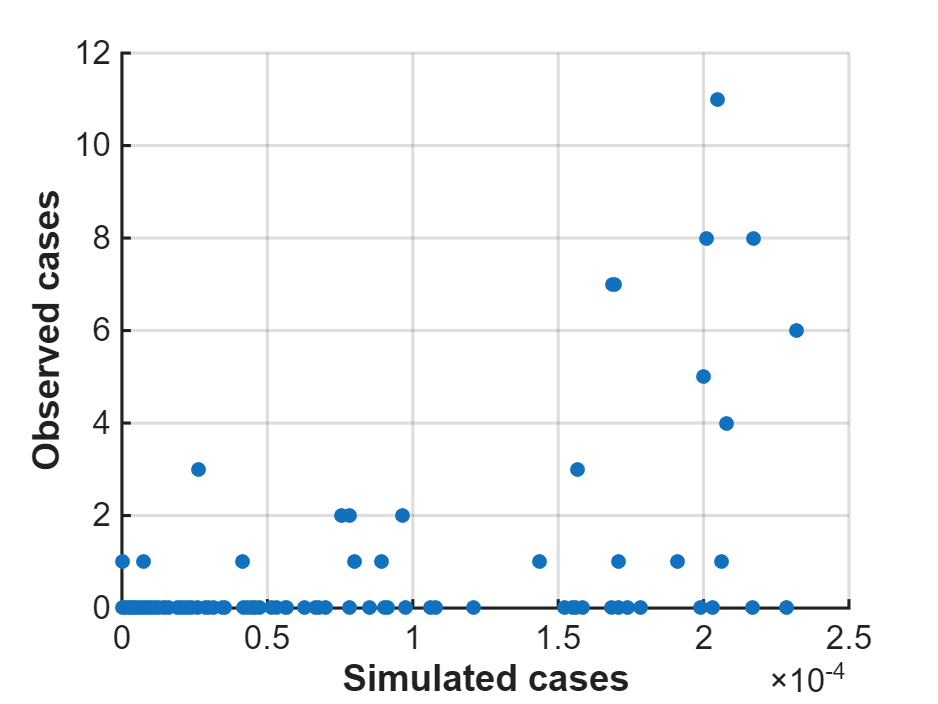}
			\caption{\small 2018-2019 cases.}
			\label{g01}
		\end{subfigure}
		\hfill
		\begin{subfigure}{0.45\textwidth}
			\centering
			\includegraphics[width=\textwidth]{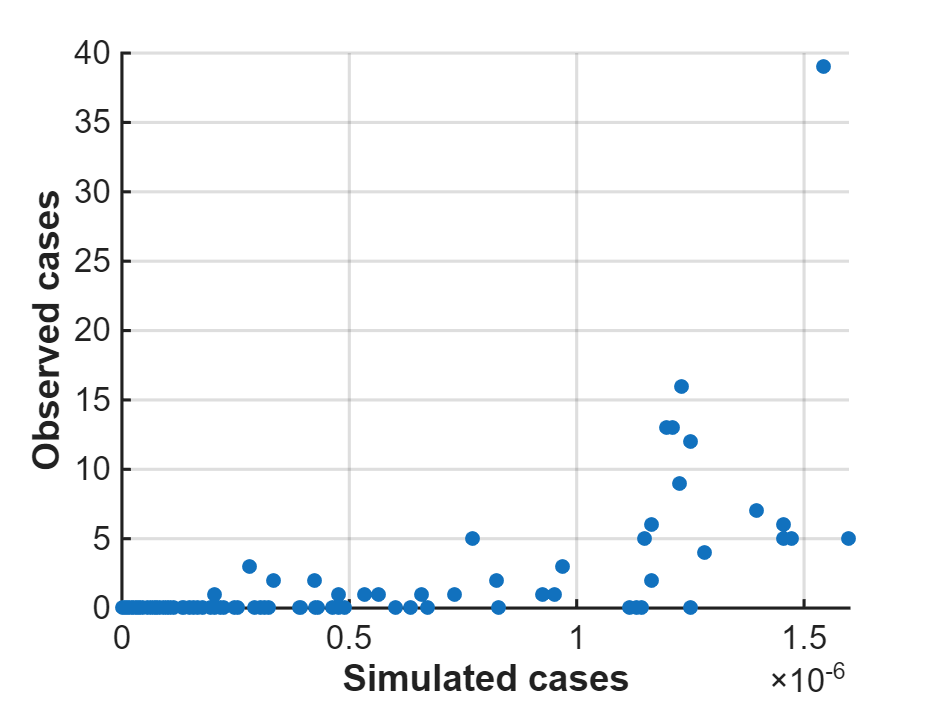}
			\caption{\small 2018-2020 cases.}
			\label{g03}
		\end{subfigure}
		\hfill
		\begin{subfigure}{0.45\textwidth}
			\centering
			\includegraphics[width=\textwidth]{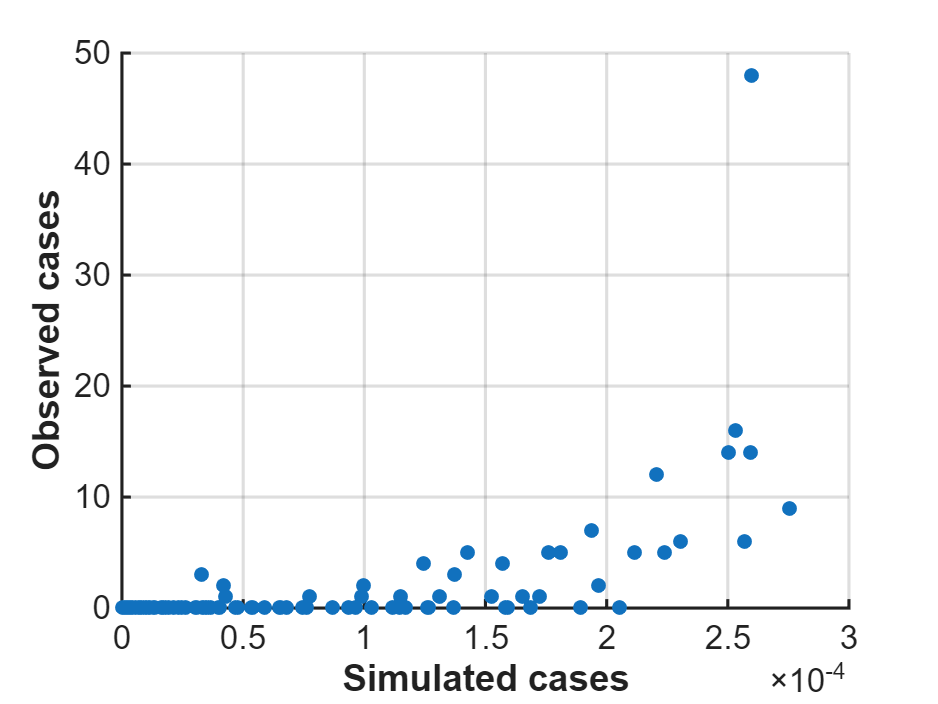}
			\caption{\small 2018-2021 cases.}
			\label{g05}
		\end{subfigure}
		\hfill
		\begin{subfigure}{0.45\textwidth}
			\centering
			\includegraphics[width=\textwidth]{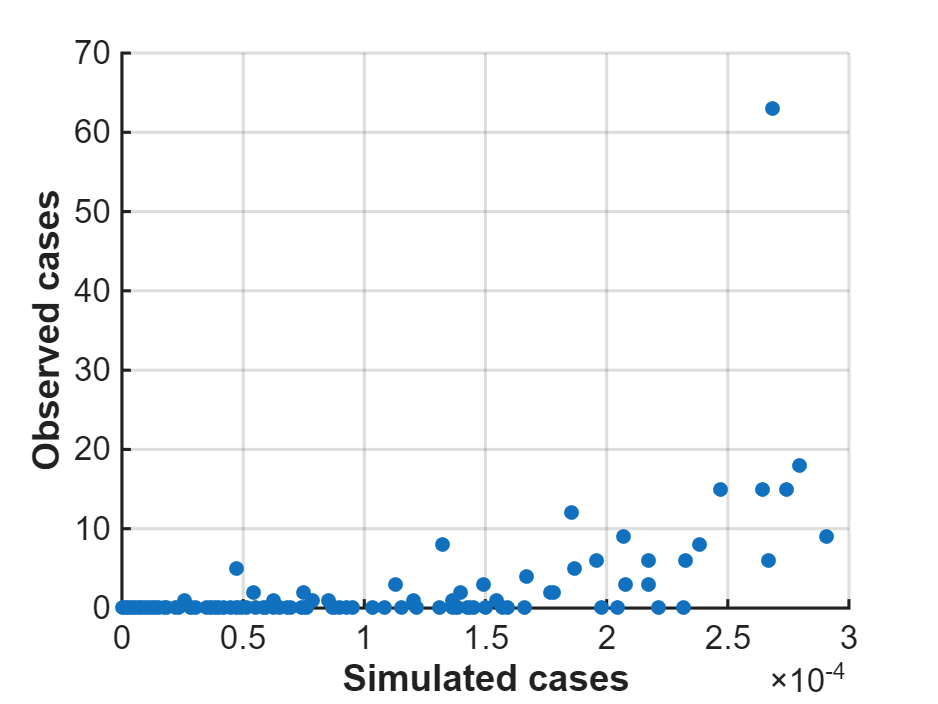}
			\caption{\small 2018-2022 cases.}
			\label{g06}
		\end{subfigure}
		\hfill
		\begin{subfigure}{0.45\textwidth}
			\centering
			\includegraphics[width=\textwidth]{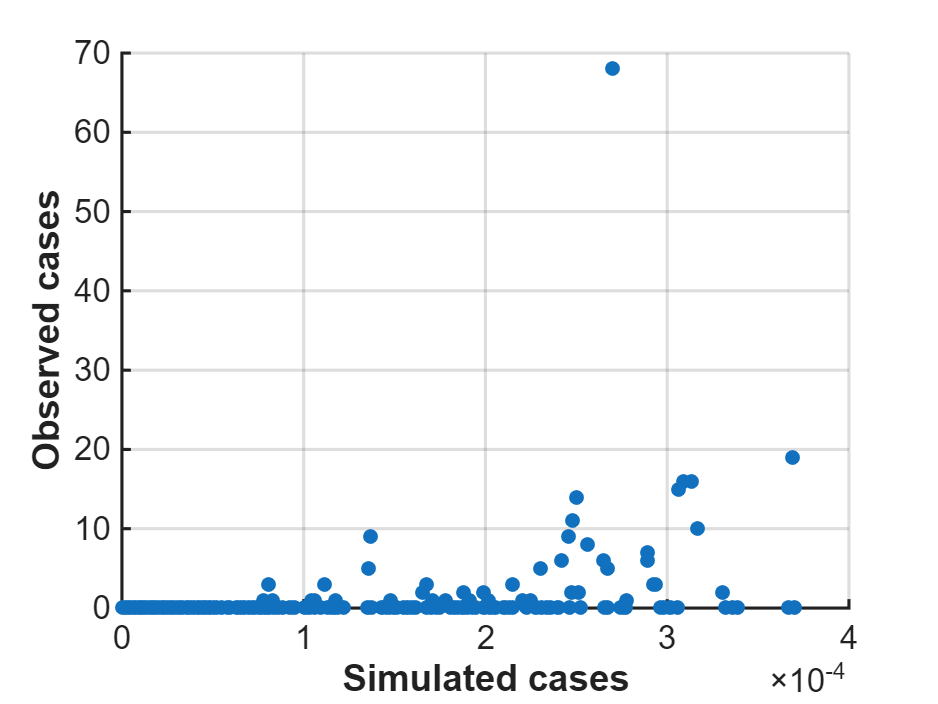}
			\caption{\small 2018-2023 cases.}
			\label{g07}
		\end{subfigure}
		\hfill
		\begin{subfigure}{0.45\textwidth}
			\centering
			\includegraphics[width=\textwidth]{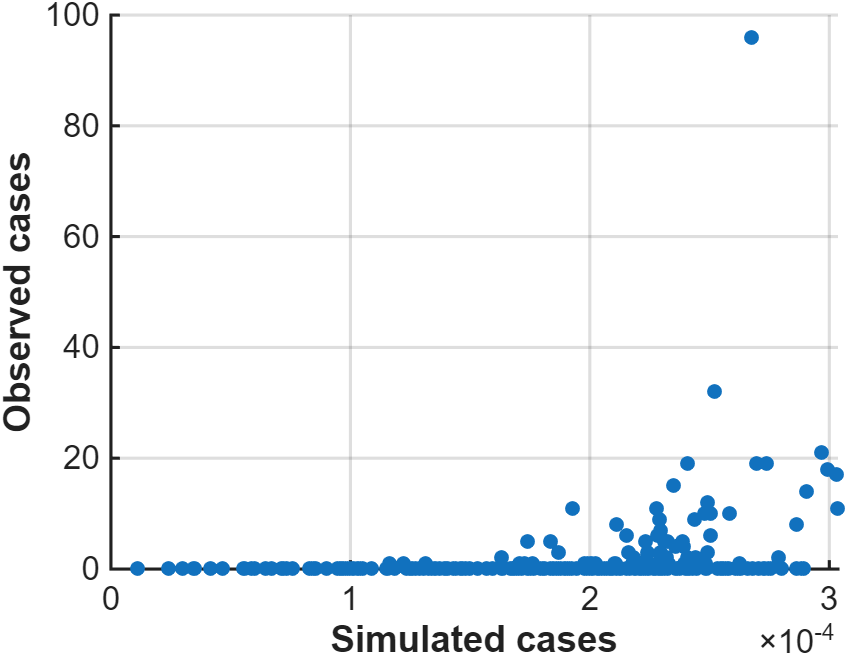}
			\caption{\small 2018-2024 cases.}
			\label{g08}
		\end{subfigure}
		\caption{\small Scatter plots for Spearman's rank correlation showing the monotonic relationship between the model prediction and observed data.}
		\label{g2}
	\end{figure}

\section{Discussions and conclusions}
	
Understanding the WNV spreading pattern allows early detection of WNV outbreaks. The identification of circulation areas also helps early planning and resource allocation for preventive measures. At the same time, it is costly to carry out systematic surveillance programs aiming to understand the spread of WNV. Therefore, in this study, we modeled the spread of WNV in Germany using a temperature-driven PDE model with advection and diffusion terms. We proved that our model has non-negative and unique solutions that are bounded, and that the system (\ref{sys2a}) is well-posed.
	
In agreement with previous studies \citep{Bhowmick2020, DBergsman2016, Mbaoma2024}, at the temporal level, the basic reproductive number $(R_0(\boldsymbol{x}))$ indicates that the presence of migratory birds increases the number of amplifying local hosts. At the spatial scale, $R_0(\boldsymbol{x})$ identified low-high risk regions of public health interest, which agreed with the observed WNV cases in birds and equids, and other temperature-driven WNV models, e.g., by \cite{Mbaoma2024}. The reproductive number classified regions in the eastern parts of Germany as the main hotspots, a result similar to that obtained by \cite{Bhowmick2023} and \cite{Mbaoma2024}. In addition, $R_0(\boldsymbol{x})$ flagged regions in the north, south and southwest of Germany as new circulation areas. Our approach explains the drivers of the WNV spreading pattern in Germany, while inferring estimates for spreading speeds. In particular, from the simulation of our PDE model, we computed the annual RMSD, which estimates the average displacement of the infection after a certain period of time. Due to a lack of real movement data for hosts, we chose diffusion coefficients that made the model prediction match the observed data.
	
During the year 2023, the model prediction showed a poor fit to observed data when only resident birds were considered. However, incorporating migratory bird compartments and capturing their directional seasonal flyways led to an improved fit for the same year. Observed WNV spreading pattern aligns with known migratory bird flyways, suggesting that directed long-range bird migration could play a significant role in disseminating the WNV. This observation is also supported by a recent study by \cite{Tth2025}. Our results suggest that migratory birds may still have played an indirect role in shaping the WNV spreading pattern in Germany, even if they may not be the primary source of infection. Thus, effective surveillance and prevention efforts of WNV should also be implemented at key stopover sites, where migratory birds are likely to rest and interact with local mosquitoes.

 The observed data for 2023 did not show any WNV cases between the northern region surrounding Hamburg and the isolated case near the Dutch border. However, our model flagged this region as a potential circulation region, a claim that was validated by the WNV data for 2024 (Figure \ref{fb} and Figure \ref{f07}). This highlights the predictive capacity of our model, mainly driven by the diffusion, advection and temperature. Furthermore, our model uncovers a transmission corridor for WNV that spreads from the east in an anti-clockwise direction, likely reflecting the temperature gradient that supports the long-term establishment of highly competent \textit{Cx.} mosquitoes. Interestingly, the circulation areas identified in our model agree with those in the study by \cite{DiPol2022}, which aimed at modeling the temperature suitability for the risk of WNV establishment in European \textit{Cx. pipiens} populations. As of November 2025, WNV bird and equid cases for the year 2025 had been reported in the the Eastern, Northern and Southwestern regions, regions that were previously flagged as circulation areas in previous years (Figure \ref{fb}, right).
	
Results from our spatially explicit PDE model demonstrate that temperature, diffusion and advection, together with resident and migratory birds have a significant influence on the spreading pattern of WNV in Germany. However, despite our model's ability to simulate the spread of WNV in Germany, some additional factors can be considered to improve such PDE models in the future. For example, land use is known to influence the general spatio-temporal distribution of mosquito populations \citep{Watts2021}, which plays a huge role in the mosquito abundance. Furthermore, we used temperature as the only weather variable to explain the heterogeneity of our mosquito parameters. Whereas, precipitation, wind speed and direction also influence the general distribution and activity of mosquitoes \citep{Ducheyne2007}. The estimation of diffusion and advection coefficients by inverse modeling causes uncertainties in the results of the model, and thus, this work can be improved by considering more systematic methods for inferring diffusion and advection coefficients, such as Bayesian methods \citep{BarajasSolano2019}. To further capture the spatial heterogeneity in the model, another improvement would be the use of spatially varying diffusion coefficients, given that in reality, movement rates are not constant. Our PDE model can be extended to a stochastic PDE system to study more random effects that cannot be modeled by deterministic PDEs. The current WNV surveillance system remains patchy, and unable to detect the virus early before a spillover to humans occurs. This is evidenced by that most data arises from incidental or passive detection rather than systematic monitoring \citep{Constant2022}. Such a system poses a significant risk of a reporting bias in the observed data, with a potential of under-reporting of cases, prompting our results to be considered as minimum estimates, rather than the exact WNV situation.

\section*{CRediT authorship contribution statement}
	
\textbf{Pride Duve:} Designed the study, developed the model, proved the mathematical properties, computed numerical simulations, and wrote the original draft. \textbf{Felix Sauer:} Supervised the study, designed the model, and reviewed the final draft. \textbf{Renke Lühken:} Supervised the study, designed the model, and reviewed the final draft.

\section*{Declaration of competing interests}
	
The authors declare that they have no known competing financial interests or personal relationships that could have appeared to influence the work reported in this paper.
	
\section*{Data availability}
	
The cleaned WNV data for birds and equids used in this manuscript is freely available  \href{https://github.com/Pride1234/West-Nile-Virus-Files.git}{\colorbox{brown!10}{\textcolor{red}{\textbf{Here}}}}. Original datasets can be obtained from the \cite{klji} website.

\section*{Acknowledgements}
	
	The authors would like to thank the Federal Ministry of Education and Research of Germany (BMBF) under the project NEED (Grant Number 01Kl2022), the Federal Ministry for the Environment, Nature Conservation, Nuclear Safety and Consumer Protection (Grant Number 3721484020), and the German Research Foundation (JO 1276/51) for funding this project. The authors are also grateful to the National Research Platform for Zoonosen for funding the research visit of Pride Duve to Prof. Dr. Reinhard Racke's lab, at the University of Konstanz. Furthermore, we would like to acknowledge and thank Reinhard Racke, Nicolas Schlosser, and Stefanie Frei for the discussions, consultations, and support.

	\bibliographystyle{plainnat}  
	\bibliography{References}      
	
\end{document}